\def\be{\begin{equation}}
\def\ee{\end{equation}}
\def\ba{\begin{array}{c}}
\def\ea{\end{array}}
\def\p{\partial}
\def\ben{$$}
\def\een{$$}
\newcommand{\bea}{\begin{eqnarray}}
\newcommand{\eea}{\end{eqnarray}}
\newtheorem{thm}{Theorem}
\newtheorem{lemma}[thm]{Lemma}
\newtheorem{rem}[thm]{Remark}
\newenvironment{proof}{\noindent
 {\bf Proof.}}{\hfill$\square$\vspace{3mm}\endtrivlist}
\begin{document}

\begin{center}

.\vspace{.1cm}

 \begin{center}{\Large \bf

Calogero{-like} model without rearrangement symmetry

%$A-$body system

%Quantum-like

  }\end{center}

\vspace{0.28cm}

  {\bf Miloslav Znojil}$^{a,b}$

\end{center}

\vspace{6mm}

 $^{a}${The Czech Academy of Sciences,
 Nuclear Physics Institute, %Theory Group,
 Hlavn\'{\i} 130,
250 68 \v{R}e\v{z}, Czech Republic, {{e-mail:
znojil@ujf.cas.cz}}}

%{e-mail: znojil@ujf.cas.cz}}

 $^{b}${Department of Physics, Faculty of
Science, University of Hradec Kr\'{a}lov\'{e}, Rokitansk\'{e}ho 62,
50003 Hradec Kr\'{a}lov\'{e},
 Czech Republic}

\vspace{5mm}

%\newpage

\section*{Abstract}

Reinterpretation of mathematics behind the exactly solvable
Calogero's $A-$particle quantum model is used to propose its
generalization. Firstly, {it is argued} that the strongly singular
nature of the Calogero's particle-particle interactions makes {the}
original permutation-invariant Hamiltonian tractable as a direct sum
$H=\bigoplus\,H_a$ of isospectral components which are mutually
independent. Secondly, {after the elimination of} the center-of-mass
motion the system is reconsidered as living in the reduced Euclidean
space $\mathbb{R}^{A-1}$ of relative coordinates {and decaying} into
a union of subsets $W_a$ called Weyl chambers. {} The mutual
independence of the related reduced forms of operators $H_a$ enables
us to choose them non-isospectral. This {breaks} the symmetry {and
unfolds} the spectral degeneracy of $H$. A new, multi-parametric
generalization of the conventional $A-$body Calogero model is
obtained. Its detailed description is provided up to $A=4$.

\section*{Keywords}

Calogero's $A_N$ model; asymmetric two-particle barriers;
exact solvability requirement; coloring of the
Weyl chambers;

\newpage

\section{Introduction \label{uvod} }

The well known \cite{perelomov,perelomovb,turbiner} quantum Hamiltonian
 \be
 H^{(A)}(\omega,C) = - \sum_{i=1}^{A}\
\frac{\p^2}{\p{x_i}^2}
 + \sum_{ i<j=2}^{A}\left [
\frac{1}{8}\,\omega^2   \,(x_i-x_j)^2
 +\frac{2C}{
 (x_i-x_j)^{2}}\right ]\,,\ \ \ \ x_k \in \mathbb{R}\,,
 \ \ \ \ C>-\frac{1}{4}
 \label{CaHa}
  \ee
invented by Calogero \cite{Calogero,Calogerox,Calogerob}
and written here in
units $\hbar =2m=1$ plays an important methodical role in
nuclear, atomic and molecular physics
and in quantum chemistry \cite{vinet}.
It
is a truly remarkable one-dimensional $A-$particle-chain
{\it alias\,} linear-molecule
model which
combines the non-numerical solvability \cite{turbinerd}
with a nontrivial
and multi-branched
phenomenological relevance \cite{suthe,polych}.
Its symmetry with respect to the
permutations of coordinates $x_k \in \mathbb{R}$ is
accompanied by a fairly realistic shape of its two-body interaction
potentials mimicking not only the expected asymptotic attraction but
also a frequently encountered repulsion at short distances
\cite{scholarpedia}.

Equally strongly the impact of the model can be felt
in mathematics.
The wealth of related innovations ranges from the
upgrades of the applications of Lie algebras
\cite{ryu,ryub,turbinerb}
and of
orthogonal polynomials
\cite{dunkl,vandiejen}
up to the
amendments of paradigms known as Wigner-Dunkl
quantum mechanics \cite{WD,WDb},
quasi-Hermitian quantum mechanics \cite{Geyer,tater,translucent,taterd,fringb,fringbb}
or ${\cal PT}-$symmetric
quantum mechanics
\cite{book,ptho,taterb,taterc,fring,fringc,correa}.

In our paper we intend to propose a multiparametric but {still}
exactly solvable generalization of model (\ref{CaHa}). For {}
introduction {} it is sufficient to consider just the most
elementary two-particle special case
 \be
 H^{(2)}(\omega,C) = -
\frac{\p^2}{\p{x_1}^2} - \frac{\p^2}{\p{x_2}^2}
 +
\frac{1}{8}\,\omega^2   \,(x_1-x_2)^2
 +\frac{2C}{
 (x_1-x_2)^{2}}\,,
 \ \ \ \ (x_1, x_2) \in \mathbb{R}^2\,.
 \label{CaHa2}
  \ee
{The} well known additional merit of the model emerges after one
defines the two new  ``relative-motion'' {{\it alias\,} Jacobi}
coordinates \cite{turbiner}
 \be
  R=\frac{1}{\sqrt{2}}(x_1+x_2)\,,\ \ \ \ X=\frac{1}{\sqrt{2}}(x_1-x_2)
  \label{tri}
 \ee
and after one manages to separate the center-of-mass
motion \cite{haka0b,haka0}. This
leads to the reduction and replacement of the initial partial
differential Schr\"{o}dinger equation by another, ordinary
differential bound-state problem in $L^2(\mathbb{R})$,
 \be
 \left [ -\frac{d^2}{d X^2} +
\frac{1}{4}\,\omega^2X^2+\frac{C}{X^{2}} \right ]
\psi_n(X) = E_n\,\psi_n(X)\,,\ \ \ \ \ \ X \in \mathbb{R}\,,\ \ \ \
\ \ \ n=0,1,\ldots\,.
 \label{SE2k}
 \ee
The positive $X\in \mathbb{R}^+$ corresponds to the ordering
$x_1>x_2\,$ (with the first particle lying, on
the real line ${\mathbb{R}}$, to the right
from the second one) while the negative choice of $X\in \mathbb{R}^-$
represents our pair of particles as
positioned inside a complementary half-plane of
$(x_1,x_2)\in \mathbb{R}^2$ where $x_1<x_2$.

The two half-lines $\mathbb{R}^{\pm} \subset \mathbb{R}$ may be
called Weyl chambers. In the literature, such a name is used at any
number of particles $A$. After the standard elimination of the
center of mass \cite{haka1,haka2} the $A-$particle-coordinate space
${\mathbb{R}}^A$ becomes reduced to its subspace $\mathbb{R}^{A-1}$.
This is performed in full analogy with the $A=2$ change of
coordinates (\ref{tri}) so that at any higher $A>2$, the space
$\mathbb{R}^{A-1}$ remains parametrized by the relative {} Jacobi
coordinates \cite{turbiner,haka3,haka4}.

In terms of the latter coordinates the singular
repulsion enters the game and the reduced space
becomes split,
in a way generalizing the $A=2$ case,
into a union of $(A-1)-$dimensional Weyl chambers
$W_a$, i.e.,
$\mathbb{R}^{A-1}=\bigcup\,W_a$.
Unfortunately, a return to $A=2$ and to the reduced
Schr\"{o}dinger equation (\ref{SE2k}) reveals the existence of a
subtle mathematical problem, well known to all of the authors of
textbooks. In a way emphasized, e.g., by Landau and Lifshitz
\cite{Landau}, the short-range force $\sim X^{-2}$ is strongly
singular in the origin so that the Calogero's system
living on the real line $\mathbb{R}$ becomes tractable as composed
of two {\em completely dynamically independent\,} quantum systems
living on the respective permutation-characterized Weyl-chamber
half-lines $\mathbb{R}^{-}=W_{(\{12\})}$ or $\mathbb{R}^{+}=W_{(\{21\})}$.
All of the meaningful phenomenological predictions
(i.e., say, of the structure of the
spectrum or of the wave functions)
are then encoded in any one of its Weyl chambers.
In this sense, the reference to the ``global'' Hamiltonians (\ref{CaHa2})
or (\ref{CaHa})
can be considered (perhaps, unintentionally) misleading.

Such an observation has several consequences, some of which will be
explained and described in what follows. The presentation of our
results will be preceded by a brief review of some of the basic
properties of the most elementary conventional Calogero model in
section \ref{revisited}. As a core and guide to our project, the
``asymmetrization'' {preserving} the solvability will be proposed
there at $A=2$. In {} subsequent section \ref{3revisited} the
details of analogous asymmetrization will be described at~$A=3$.
After transition to arbitrary $A$,  our innovated Calogero-like
model will finally be analyzed and discussed in the last two
sections \ref{4revisited} and \ref{summary} and in Appendix.

\section{Spectral degeneracy and its unfolding at $A=2$\label{revisited}}

The ${\cal O}(x^{-2})$ singularity in Eq.~(\ref{SE2k}) resembles the
centrifugal term in the radial Schr\"{o}dinger equation of a
centrally symmetric harmonic oscillator in a specific $\ell-$th
partial wave. Formally, we may reparametrize $C=\ell(\ell+1)$ and
reconstruct the real (though, in general, non-integer and/or
non-positive) angular-momentum-like parameter $\ell$ from a given
value of{} coupling constant $C$,
 \be
\ell =\ell(C) = -\frac{1}{2} + \sqrt{ \frac{1}{4}+{C} } \ .
\label{fu}
 \ee
A decisive difference from  the radial Schr\"{o}dinger equation is
that our present Eq.~(\ref{SE2k}) has to live, by definition, on the
whole real line of $X \in \mathbb{R}$. In the Calogero's operator
$H^{(2)}$ of Eq.~(\ref{CaHa2}) the two particle coordinates $x_1$
and $x_2$ are independent variables so that one has to represent
{the} motion in both the left Weyl chamber $W_{(\{12\})}$ of $X\in
(-\infty,0)$ and the right Weyl chamber $W_{(\{21\})}$ of $X\in
(0,\infty)$.

\subsection{Singularity in the origin}

Due to the singularity at $X=0$ our reduced Hamiltonian is merely
essentially self-adjoint~\cite{Simon}. In the origin some
additional boundary conditions have to
be imposed in order to make the problem mathematically well defined.
An explicit specification of physics behind these boundary
conditions is necessary. Only then one can speak about a
consistent quantum theory and about a unique operator associated
with the differential expressions of Eq.~(\ref{CaHa})
or,
at
$A=2$, of Eqs.~(\ref{CaHa2}) and~(\ref{SE2k}).

In this light we will accept,
at
$A=2$, the most common
convention by which one requires
 \be
  \psi_n(X) \sim X^{\ell+1}\,,\ \ \ \ \ X \sim 0\,,\ \ \ \ \
 \ell>-1/2
  \,.
  \label{constra}
 \ee
This is equivalent to the suppression of the dominant component $
\sim X^{-\ell}$ of the wave function near the origin (see a few
related comments in \cite{PRAZnojil}). {The} Calogero's reduced
symmetric-interaction Eq.~(\ref{SE2k}) can be {then} perceived as
living on the full real line. In spite of the presence of the
barrier, the bound states are made well defined by constraint
(\ref{constra}).

In a more consequent and more physics-oriented conceptual setting
the interpretation of the role of the barriers is less clear. The
tunneling between the two neighboring half-lines {\it alias\,} Weyl
chambers is fully suppressed. {One} could speak about a direct sum
of the two independent quantum systems. Moreover, once we recall the
even-parity symmetry of the potential in (\ref{SE2k}), we have a
full freedom of redirecting our attention from the model defined on
the whole real line back to just one of the independent sub-models.
Both of these sub-models are isospectral so that the spectrum of the
complete system is doubly degenerate.

In many applications, people decide to ignore the impenetrability of
the barrier (i.e., the absence of any meaningful contact between the
subsystems) {preferring} the reference to the full-line Hamiltonian
(\ref{CaHa}). Such a slightly manipulative decision is presented as
well motivated by the possibility of working, mostly in the context
of statistical physics, with the two alternative versions of the
wave functions which are constructed as spatially symmetrized or
antisymmetrized,
 \be
 \psi_n^{(symm/antisymm)}(X)
 =\frac{1}{\sqrt{2}}[\psi_n(X)\pm \psi_n(-X)]
 %\,,\ \ \ \ A=2
 %E=E_n
 \,,\ \ \ \ n=0,1,\ldots
 \,.
 \label{ugos}
 \ee
In spite of a complete absence of tunneling, the respective
wave-function superpositions (\ref{ugos}) are declared tractable
as mimicking a system of two indistinguishable
``bosons'' (due to the Pauli-principle-simulating symmetry
$\psi(-X)=\psi(X)$) or  ``fermions'' (with the spatial antisymmetry
sampling the fermionic statistics).

Due to the degeneracy in combination
with the absence of tunneling
one could equally well decide to consider some more
sophisticated full-line requirements with, say,
wave functions such that $\psi(-X)=q\,\psi(X)$, $X>0$
using an arbitrary complex $q \in {\mathbb C}$. Nevertheless,  in
spite of being acceptable mathematically, such a generalization
is not used. In the overall pragmatic context of the
applied quantum mechanics, even the two most elementary specifications of
$q = \pm 1$ are believed to make the most elementary one-dimensional
two-particle version (\ref{SE2k}) of the conventional Calogero model
as well as all of its $A>2$ descendants sufficiently
appealing, intuitive and useful.

In our present paper we intend to advocate a different philosophy.

\subsection{Asymmetric barrier\label{sollasi}}

During the recent developments of quantum theory the
Pauli-principle-rooted paradigm seems to be shattered. The first,
purely mathematical reason is that once we have zero tunneling
through the Calogero's barrier, there is no reason for keeping this
barrier formally left-right symmetric. In Eq.~(\ref{SE2k}) with
$C=\ell(\ell+1)$, therefore, we can feel free to ``asymmetrize''
the singular term,
 \be
 \frac{\ell(\ell+1)}{X^2} \ \to \
 \left \{
 \ba
 \ell_{(left)}(\ell_{(left)}+1)
  /X^2 \,,\ \ \ \ \ X < 0\,, \\
 \ell_{(right)}(\ell_{(right)}+1)
 /X^2 \,,\ \ \ \ X > 0\,,
 \ea
 \right .\,\ \ \ \  \ell_{(left)}\neq \ell_{(right)}
 %x = \rm small\,
 \,.
  \label{ainq}
 \ee
Precisely such an idea of asymmetrization of the
impenetrable barrier at $A=2$ served as an inspiration of our
present paper.

The idea
appeared supported, independently, by some very recent developments
in applications. The necessity of working with an
asymmetric singular barriers emerged, for example, in the context of
relativistic quantum mechanics \cite{Ishkhab}. The presence of
interactions containing an asymmetric impenetrable barrier
has been found productive, especially for certain systems
described by one-dimensional Dirac equations (cf. \cite{Ishk}).
In
another paper (cf. \cite{Ishkha}) the same authors introduced and solved
Dirac equation in which the impenetrable barrier has been kept
left-right symmetric. The routine reduction of the equation
led to an equivalent Schr\"{o}dinger-type differential equation
defined along the whole real line in which the effective ${\cal
O}(x^{-2})$ barrier re-appeared in a manifestly asymmetric
form of Eq.~(\ref{ainq}) (see also a few more comments
on this topic in Appendix).

Let us now accept the idea
and let us replace,
in our reduced Calogero-Schr\"{o}dinger
equation~(\ref{SE2k}),
the conventional
symmetric ${\cal O}(x^{-2})$ barrier
by
its two-parametric asymmetric generalization (\ref{ainq}).
Such a model (i.e., two decoupled
radial harmonic oscillators with different angular momenta)
is exactly solvable \cite{Fluegge}.
Its spectrum can be written
in closed form,
 \be
 E_{}=4k+2\,\ell+3\,,
 \ \ \  k=0,1,2,\ldots\,,
 \ \ \ \ell=\ell_{{left}}>-1/2\,\ \ \ {\rm or}
 \ \ \ \ell=\ell_{{right}}>-1/2
 \,.
 \label{jehosp}
 \ee
{The} degeneracy as encountered when $ \ell_{{left}}=\ell_{{right}}
$ becomes, up to accidental confluences, removed when $
\ell_{{left}}\neq \ell_{{right}}$.

Due to
the simplicity of the model it becomes entirely
straightforward to deduce the basic consequences.
In a methodically motivated analysis let us
consider, nevertheless, just
a
special case of Eq. (\ref{ainq}) in which the asymmetry is
kept one-parametric and maximal,
 \be
 \ell_{left}=\ell_{left}(y)=y\,,\ \ \ \ \ell_{right}=\ell_{right}(y)=-y\,.
 \label{las}
 \ee
After we
recall Eq.~(\ref{SE2k}) and
out-scale the inessential spring constant
$\omega \to 2$ we arrive at a drastically
asymmetric one-parametric toy-model potential
 \be
  V_{}(X,y) = \left \{
 \begin{array}{ll}
 X^2+y(y+1)/X^{2} \,,\ \ \  \ \ &X < 0\,,\\
 X^2+(y-1)y/X^{2}\,,\ \ \  \ \ &X > 0\,,
 \ea
 \right .\,
 \ \ \ y = {\rm real}\,
 \label{kvIK}
 \ee
the shape of which is sampled here in Figure \ref{picee3ww}.

%\begin{verbatim}
%el:=1/3;;plot({Heaviside(-x)*(x^2+el*(el+1)/x^2),
%
%Heaviside(x)*(x^2-el*(-el+1)/x^2)},x=-2.8..2.8,view=-4.6..8.6);
%\end{verbatim}
%
%********** Figure 1 zde
\begin{figure}[h]                     %instead of \begin{figure}[t]
\begin{center}                         %instead of \begin{center}
\epsfig{file=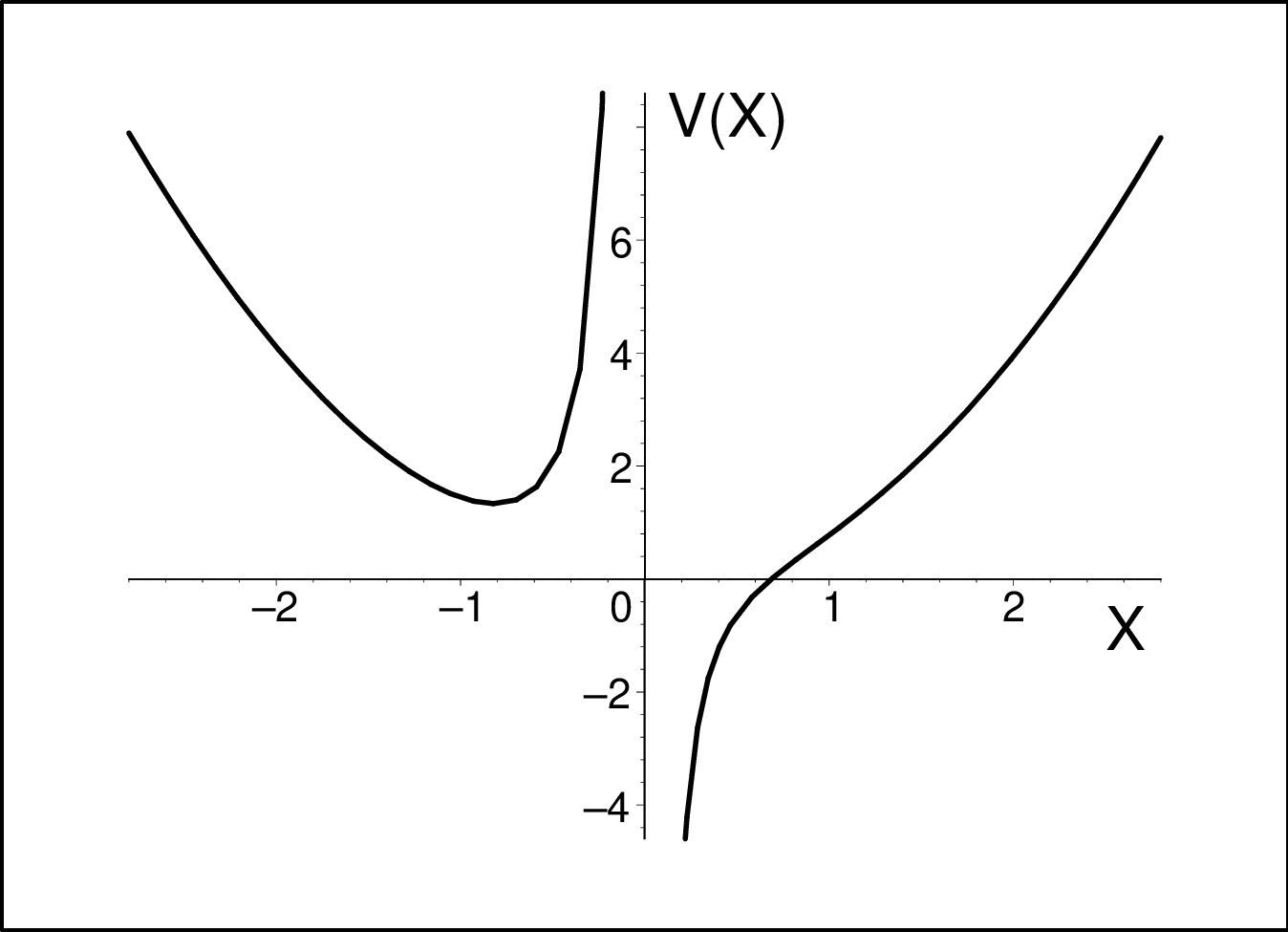,angle=270,width=0.4\textwidth}
\end{center}                         %instead of \end{center}
\vspace{-2mm}\caption{Potential (\ref{kvIK}) at $y=1/3$.
 \label{picee3ww}}
\end{figure}

%%\begin{verbatim}
%plot([2*ela+3,-2*ela+3,2*ela+7,
%>
%> -2*ela+7,2*ela+11,-2*ela+11],ela=0..1, color=[red,blue,red,blue,red,blue],
%style=[line,point,line,point,line,point]);
%
%plot([2*ela+3,max(-2*ela+3,1+2*ela),2*ela+7,
%> max(-2*ela+7,5+2*ela), 2*ela+11,max(-2*ela+11,9+2*ela)],ela=0..1, color=[red,blue,red,blue,red,blue],
%> style=[line,point,line,point,line,point]);
%%
%%
%plot([max(2*ela+3,1-2*ela),max(-2*ela+3,1+2*ela),max(2*ela+7,5-2*ela),
% max(-2*ela+7,5+2*ela), max(2*ela+11,9-2*ela),max(-2*ela+11,9+2*ela)],ela=-1..1, color=[red,blue,red,blue,red,blue],
% style=[line,point,line,point,line,point]);
%
%
%%
%%%\end{verbatim}
%
%********** Figure 1 zde
\begin{figure}[h]                     %instead of \begin{figure}[t]
\begin{center}                         %instead of \begin{center}
\epsfig{file=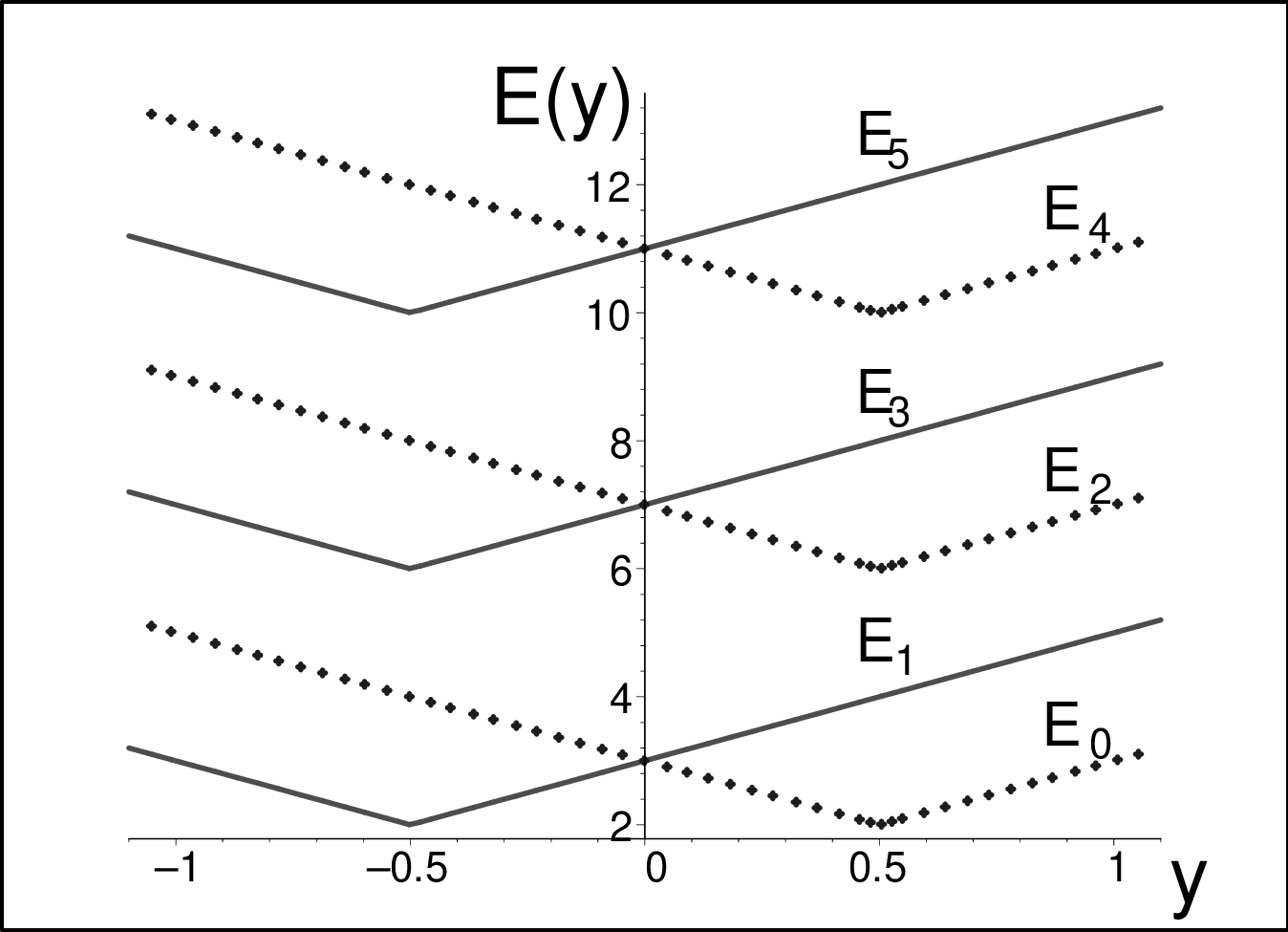,angle=270,width=0.5\textwidth}
\end{center}                         %instead of \end{center}
\vspace{-2mm}\caption{Bound-state energies in
(\ref{kvIK}). Full/dotted lines mark the left-/right-well sub-spectra.
 \label{icee3ww}}
\end{figure}
%
%el:=1/3;a:=1;b:=1;c:=1;d:=1;plot({a*R^(1+2*el)-c*R^(1)-d*R^(2*el)+b,
%a*(1+el)*R^(1+2*el)-c*(1-el)*R^(1)-d*el*R^(2*el)-b*el},R=0..1.3);

The explicit non-numerical form of the
spectrum
immediately follows from Eq.~(\ref{jehosp})
and its $y-$dependence is
displayed in Figure \ref{icee3ww}.
The only instant of
degeneracy occurs at $y = 0$.
The spectrum even becomes
equidistant at $|y|>1/2$.
One only has to add that
the price to be paid for the
unfolding of the degeneracy is in fact not too small:

\begin{lemma}
\label{predlemma}
Schr\"{o}dinger Eq.~(\ref{SE2k}) with
one-parametric
asymmetric potential (\ref{kvIK})
yields, at $y \neq 0$,
the non-degenerate
bound-state spectrum
(\ref{jehosp}),
but all of its ``left''
eigenfunctions
$\psi_{n}(X)$
such that
$X \in W_{(\{12\})}=(-\infty,0)\,$
vanish
along the ``right'' Weyl-chamber
half-line  $W_{(\{21\})}=(0,\infty)\,$ and {\rm vice versa}.
\end{lemma}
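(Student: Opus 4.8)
The plan is to turn the impenetrability of the barrier, already stressed in the paragraphs preceding~(\ref{constra}), into a precise orthogonal decomposition and then to read off both assertions from a comparison of two half-line spectra. First I would argue that the self-adjoint operator determined by the differential expression in~(\ref{SE2k}) together with the boundary prescription~(\ref{constra}) acts as an orthogonal direct sum $H = H_{(\{12\})} \oplus H_{(\{21\})}$ on $L^2(W_{(\{12\})}) \oplus L^2(W_{(\{21\})})$. The point is that the strongly singular term forces every admissible eigenfunction to obey the one-sided vanishing law $\psi(X)\sim |X|^{\ell+1}$ separately on each chamber, with $\ell=\ell_{left}$ for $X<0$ and $\ell=\ell_{right}$ for $X>0$, so that no probability current crosses $X=0$ and each half-line subspace is left invariant by $H$.

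Second, each summand is the radial harmonic oscillator of the respective angular momentum, exactly solvable as in~\cite{Fluegge}: its Laguerre-type eigenfunctions vanish like $|X|^{\ell+1}$ at the origin and decay at infinity, and when continued by zero across $X=0$ they are genuine $L^2(\mathbb{R})$ eigenfunctions of $H$ supported in one single chamber. The two eigenvalue ladders are $4k+2\ell_{left}+3$ and $4k+2\ell_{right}+3$, $k=0,1,2,\ldots$, which under the maximal-asymmetry parametrization~(\ref{las}) reproduce~(\ref{jehosp}). Within either ladder the levels are manifestly simple, so any degeneracy of $H$ could arise only from a coincidence of one left level with one right level.

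Third, I would exclude such coincidences for $y\neq 0$. A coincidence $4k_1+2\ell_{left}+3 = 4k_2+2\ell_{right}+3$ forces $\ell_{left}-\ell_{right}$ to be an even integer. Under~(\ref{las}) with the regularity requirement $\ell>-1/2$ imposed on both chambers one has $\ell_{left}-\ell_{right}=2y$ for $|y|<1/2$, which is an even integer only at $y=0$; for $|y|>1/2$ the constraint $\ell>-1/2$ forces the reparametrization $\ell_{right}\to y-1$ at the same coupling $C$, whence $\ell_{left}-\ell_{right}=1$ is odd and the coincidence never occurs. Hence for every $y\neq 0$ each eigenvalue of $H$ lies in exactly one ladder, its eigenspace is one-dimensional and contained in a single half-line summand, and the eigenfunction therefore vanishes identically on the opposite Weyl chamber. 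This is exactly the pair of claims of the lemma.

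The step I expect to be the main obstacle is the first one: establishing rigorously that the prescription~(\ref{constra}) yields the \emph{decoupled} self-adjoint operator rather than some extension that glues $\psi(0^-)$ to $\psi(0^+)$. This is a statement about the deficiency indices and the limit-point/limit-circle status of the interior singular point $X=0$~\cite{Simon}; in the repulsive regime the endpoint is limit-point and the splitting is automatic, while in the weakly singular regime $-1/2<\ell<1/2$ one must check that~(\ref{constra}) selects the regular, non-connecting boundary behaviour on each side. Once the orthogonal splitting is granted, the remaining spectral comparison is entirely elementary.
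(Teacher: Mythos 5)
Your proposal is correct and follows essentially the same route as the paper, which states Lemma~\ref{predlemma} without a formal proof, treating it as an immediate consequence of the impenetrability of the singular barrier and of the explicit half-line spectra~(\ref{jehosp}) (the non-degeneracy for $y\neq 0$ being read off from Figure~\ref{icee3ww}). The details you supply --- the orthogonal direct-sum decomposition selected by the boundary prescription~(\ref{constra}), and the parity argument excluding left--right level coincidences for $y\neq 0$, including the reparametrization $\ell_{right}\to y-1$ for $|y|>1/2$ that matches the paper's remark about the spectrum becoming equidistant there --- correctly and rigorously fill in what the paper leaves implicit.
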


Analogous consequences can be drawn after
the more general
two-parametric asymmetrization of the barrier (cf. (\ref{ainq})),
or after transition to a larger number of particles.
With the details to be explained below,
let us only mention here
that once we, at any $A$,
asymmetrize the boundary between
some two Weyl chambers $W_a$ and $W_b$
by
the choice of the respective couplings $C_a \neq C_b$,
we reveal
that
in a way sampled by Lemma \ref{predlemma} and
Figure \ref{icee3ww}, the
composition of the
two sub-spectra may remain non-degenerate.
Up to the accidental degeneracies, therefore,
the $W_a-$supported  wave function $\psi_a$
will necessarily vanish in $W_b$ and
{\it vice versa}.

In the limit of $C_a = C_b$ the spectrum becomes
degenerate
so that we may
return to the wave-function superpositions as
sampled by Eq.~(\ref{ugos}).
Due to the
absence of tunneling,
there is still no reason for
leaving the
Weyl-chamber-dependent direct-sum
interpretation
of Hamiltonians (\ref{CaHa}).

We may now return to the two-body case
in which the following statement is immediate.

\begin{thm}
\label{preteor}
After the asymmetrization (\ref{ainq}) of the potential in Eq.~(\ref{SE2k}),
we merely have to replace
the conventional
two-parametric Calogero Hamiltonian of Eq.~(\ref{CaHa2})
by its three-parametric
Weyl-chamber-dependent asymmetric-barrier generalization
 \be
 H^{(2)}(\omega,C_{(left)},C_{(right)}) =
 \left \{
 \ba
 -\frac{\p^2}{\p{x_1}^2} - \frac{\p^2}{\p{x_2}^2}
 +
 \frac{1}{8}\,\omega^2   \,(x_1-x_2)^2
 +\frac{2C_{(left)}}{
 (x_1-x_2)^{2}}\,,\ \ \ \ \ x_1<x_2 \,, \\
 -\frac{\p^2}{\p{x_1}^2} - \frac{\p^2}{\p{x_2}^2}
 +
 \frac{1}{8}\,\omega^2   \,(x_1-x_2)^2
 +\frac{2C_{(right)}}{
 (x_1-x_2)^{2}}\,,\ \ \ \ \ x_1>x_2 \,. \\
  \ea
 \right .\,
 \,
  \label{ainq2}
 \ee
\end{thm}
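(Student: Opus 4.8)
The plan is to run the center-of-mass reduction that produced Eq.~(\ref{SE2k}) in reverse, carrying the already-established barrier asymmetrization (\ref{ainq}) back up to the full two-particle coordinates $(x_1,x_2)$. Since the statement is a pure re-derivation, the bulk of the argument is change-of-variables bookkeeping, and its only genuinely nontrivial ingredient — the admissibility of independent couplings on the two sides — has in fact already been supplied by the preceding discussion and by Lemma~\ref{predlemma}.

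First I would recall how the orthogonal Jacobi map (\ref{tri}) acts on each piece of (\ref{CaHa2}): it turns the kinetic operator into $-\partial_R^2-\partial_X^2$ and the confining term $\frac{1}{8}\omega^2(x_1-x_2)^2$ into $\frac{1}{4}\omega^2 X^2$, while the singular term $2C/(x_1-x_2)^2$ becomes $C/X^2$ because $(x_1-x_2)^2=2X^2$. Separating off the center-of-mass oscillator in $R$ then yields precisely (\ref{SE2k}), with the factor of two in the coupling absorbed by this same Jacobian. Reinstating the $R$-motion reverses the step and reconstructs (\ref{CaHa2}) from (\ref{SE2k}) without any further adjustment.

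Next I would fix the geometric dictionary between the two pictures: the sign of $X=(x_1-x_2)/\sqrt{2}$ coincides with the particle ordering, so $X<0$ is exactly the left chamber $W_{(\{12\})}$ where $x_1<x_2$, and $X>0$ the right chamber $W_{(\{21\})}$ where $x_1>x_2$. Consequently the left/right split in (\ref{ainq}) translates term by term, after multiplication by the Jacobian factor of two, into $2C_{(left)}/(x_1-x_2)^2$ on $\{x_1<x_2\}$ and $2C_{(right)}/(x_1-x_2)^2$ on $\{x_1>x_2\}$ — which is exactly the branch structure displayed in (\ref{ainq2}), with $C_{(left)}=\ell_{(left)}(\ell_{(left)}+1)$ and $C_{(right)}=\ell_{(right)}(\ell_{(right)}+1)$.

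The hard part is not any of the above manipulations but the logically prior question of whether one is entitled to prescribe $C_{(left)}\neq C_{(right)}$ at all: for a single self-adjoint operator on a connected line this would normally demand a matching condition across $X=0$. Here it is legitimate precisely because the impenetrable $\mathcal{O}(X^{-2})$ barrier suppresses all tunneling, so the reduced problem is a direct sum over the two chambers, each carrying its own boundary condition (\ref{constra}); this is the content already secured in the discussion preceding the statement. Combined with the strict orthogonality of $R$ to $X$ — so that restoring the center-of-mass motion can never mix the two orderings — this decoupling makes the asymmetric assignment consistent and completes the identification of (\ref{ainq2}) as the faithful two-particle lift of the asymmetrized equation.
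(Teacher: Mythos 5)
Your proposal is correct and follows essentially the same route as the paper, which presents the theorem as an immediate consequence of the preceding discussion: the Jacobi change of variables (\ref{tri}), the identification of the sign of $X$ with the particle ordering $W_{(\{12\})}$ vs.\ $W_{(\{21\})}$, and the direct-sum decoupling across the impenetrable barrier secured around Lemma~\ref{predlemma} justify the independent couplings exactly as you argue. One cosmetic slip only: the separated center-of-mass motion in $R$ is free rather than an oscillator, since the confinement in (\ref{CaHa2}) depends solely on $x_1-x_2$; this does not affect the argument.
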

Strictly speaking the couplings
$C_{(left)}=\ell_{(left)}(\ell_{(left)}+1)$ and $
C_{(right)}=\ell_{(right)}(\ell_{(right)}+1) $
may though need not be different.
In both of these scenarios
the main message delivered by Theorem~\ref{preteor}
is that
if we wish to
understand and to
break the symmetry of the
system (represented, in this section, by the $A=2$
Hamiltonian
$H^{(2)}(\omega,C)$ of Eq.~(\ref{SE2k}))
and if we wish to introduce its consistently asymmetrized
generalization
(sampled, say, via Figure \ref{picee3ww} above),
we just have to take the two {\em different\,}
versions of the Hamiltonian
and we have to {\em restrict them} to the
respective single Weyl chamber.
In a way prescribed by Eq.~(\ref{ainq2}),
and in a way which will be generalized below:
A
similar conclusion will apply to the
asymmetrized
Calogero-like systems of more particles.

As long as the related analysis would be technically less
transparent, it makes sense to proceed {more slowly} and to move,
first, to the very next special case with three particles.

\section{Spectral degeneracy and its unfolding at $A=3$\label{3revisited}}

 %\noindent
For some
purposes
(and, in particular, for methodical purposes)
the original symmetric Calogero model (\ref{CaHa})
as  well as its present
asymmetric Calogero-like generalization
are too elementary at $A=2$.
In both of these regimes,
fortunately, the
pedagogical merits of these models
are only partially lost after transition to
$A=3$ \cite{tater}.

%\subsection{More than two particles\label{usetri}}

%plot({[sin(t), cos(t), t=-Pi..Pi],1000/x,x/sqrt(3),-x/sqrt(3)},x
% =-1.2..1.2,view=-1.2..1.2,axes=none,color=black,scaling=constrained);
%
%%********** Figure 1 zde
\begin{figure}[h]                     %instead of \begin{figure}[t]
\begin{center}                         %instead of \begin{center}
\epsfig{file=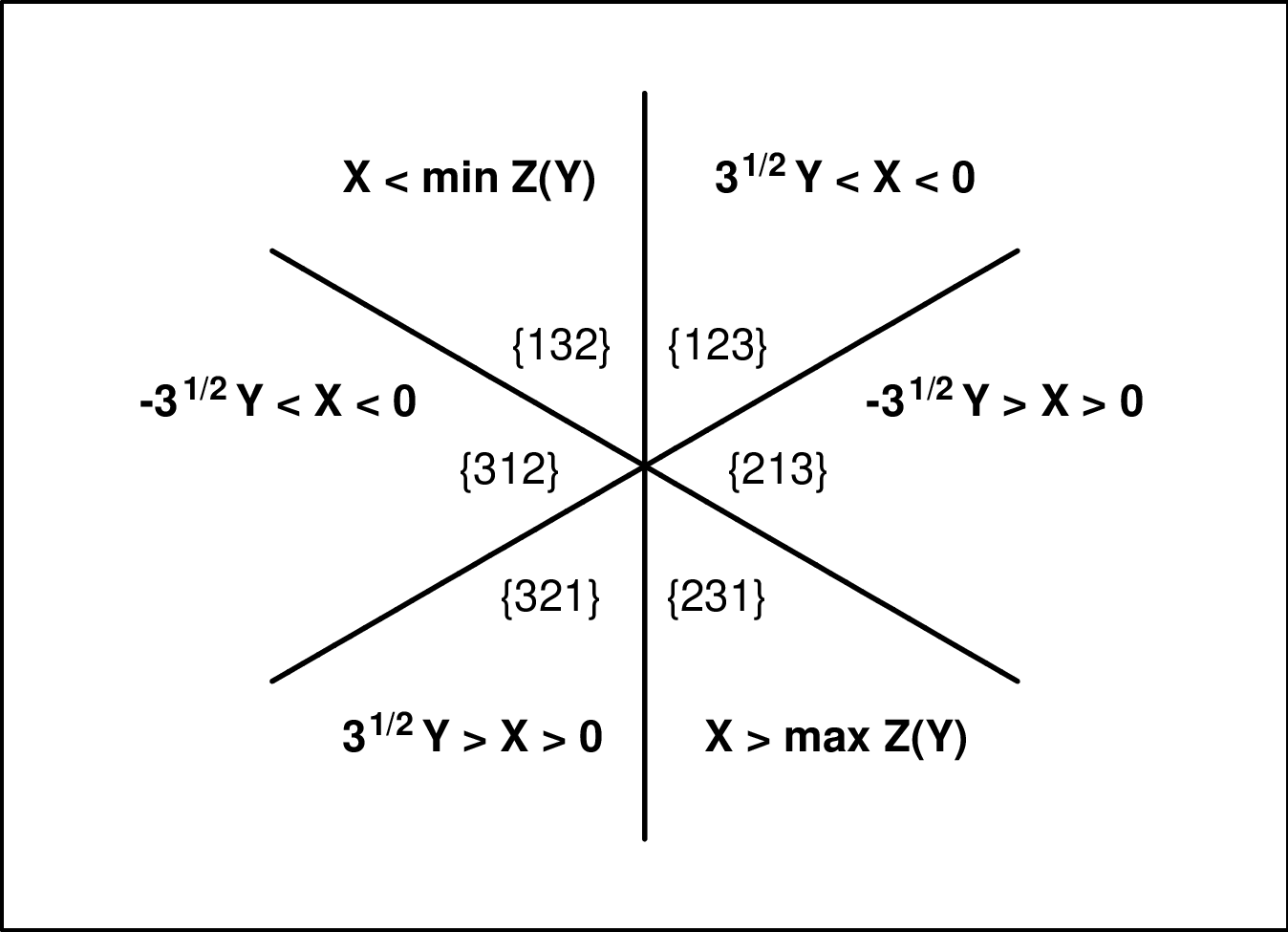,angle=270,width=0.4\textwidth}
\end{center}                         %instead of \end{center}
\vspace{-2mm}\caption{Six wedge-shaped
Weyl chambers $W_{\{ijk\}}$
in $X-Y$ plane at $A=3$.
%separated by the impenetrable barriers (solid lines)
%at $A=3$. T
 \label{3wwu}}
\end{figure}

\subsection{Weyl chambers\label{setri}}

 \noindent
In the Calogero's symmetric model
(\ref{CaHa}) with $A=3$
the change of variables
 \be
 R = \frac{1}{\sqrt{3}}\,(x_1+x_2+ x_3)\,, \ \ \ \ \
  X = \frac{1}{\sqrt{2}}\,(x_1-x_2)\,, \ \ \ \ \
 Y = \frac{1}{\sqrt{6}}\,(x_1+x_2-2\,x_3)
 \label{ular}
  \ee
enables us to eliminate the center-of-mass motion and to
arrive at
the
reduced but still
partial differential Schr\"{o}dinger equation in the $X-Y$ plane $\mathbb{R}^2$,
 \be
\left [ -\frac{\p^2}{\p X^2} -\frac{\p^2}{\p Y^2}
+\frac{3}{8}\,\omega^2(X^2+Y^2) +\frac{C}{X^{2}}
 %\right .
 %\een
 % \be
 %\left .
+\frac{C}{(\sqrt{3}Y-X)^{2}}
+\frac{C}{(\sqrt{3}Y+X)^{2}}-E \right ] \,\Phi(X,Y)=0\,.
 \label{gular}
 \ee
At any non-vanishing coupling constant $C > -1/4$
the three impenetrable barriers
divide
the
plane
into as many as six
Weyl-chamber
sectors.
Their shapes are
displayed in
Figure \ref{3wwu}.
We define them there not only
by the inequalities
imposed upon $X$ and $Y$
(in which the symbol $Z(Y)$ stands for the
set $Z(Y)=\{+ \sqrt{3}\,Y,- \sqrt{3}\,Y\}$ of two elements) but also,
equivalently, by the subscripts in  $W_{\{ijk\}}$
which
refer to the respective particle orderings $x_i<x_j<x_k\,$.

%\subsection{More than two particles\label{usetri}}

%plot({[sin(t), cos(t), t=-Pi..Pi],1000/x,x/sqrt(3),-x/sqrt(3)},x
% =-1.2..1.2,view=-1.2..1.2,axes=none,color=black,scaling=constrained);
%
%%********** Figure 1 zde
\begin{figure}[h]                     %instead of \begin{figure}[t]
\begin{center}                         %instead of \begin{center}
\epsfig{file=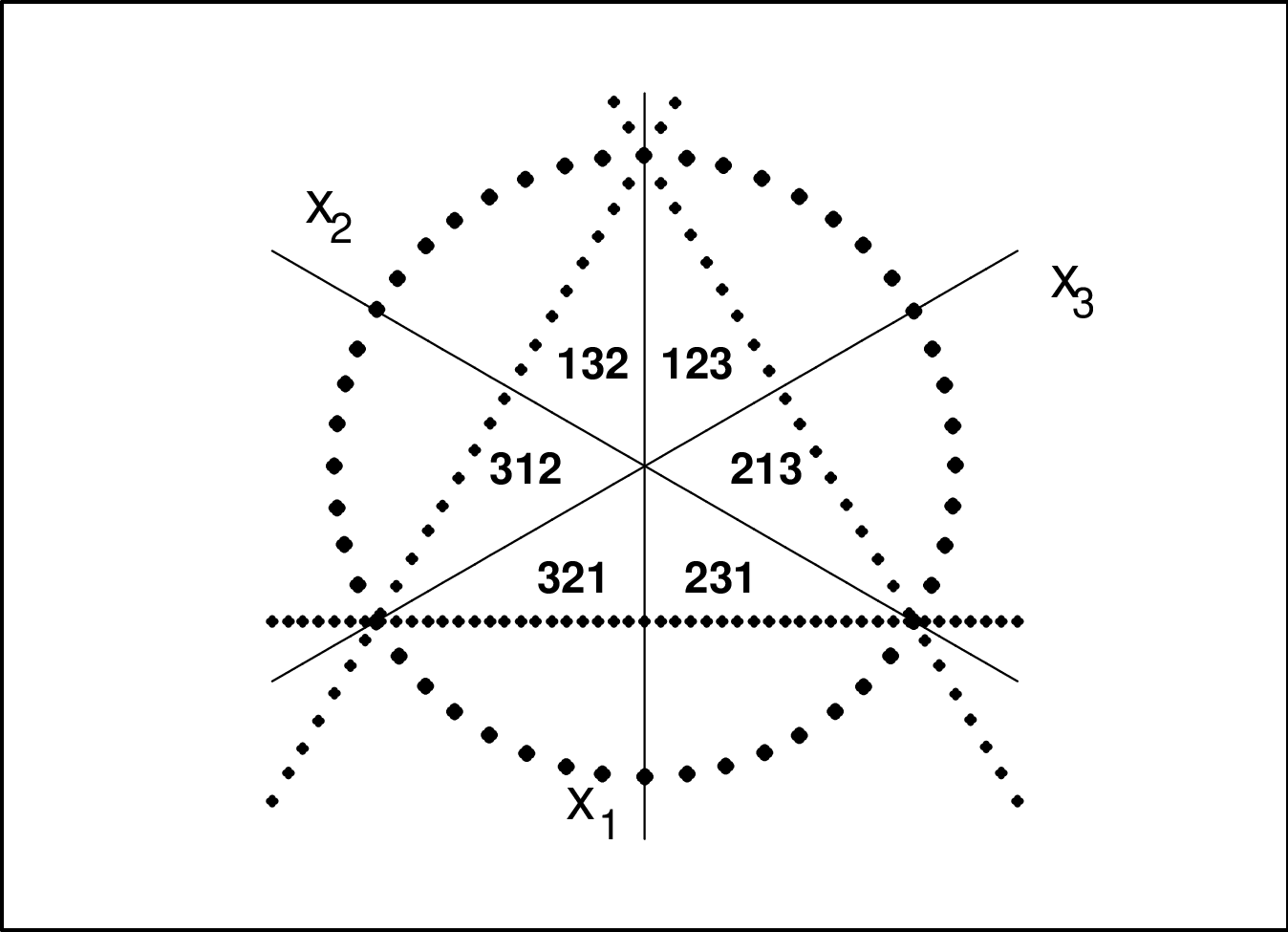,angle=270,width=0.4\textwidth}
\end{center}                         %instead of \end{center}
\vspace{-2mm}\caption{%Six two-dimensional wedge-shaped Weyl chambers $W_{\{ijk\}}$
%%separated by impenetrable barriers (solid lines)
%at $A=3$.
%%
Weyl-chamber projections on the eye-guiding
circle and triangle.\label{3ww}}
\end{figure}

Another, third,
equally
$(A=3)-$specific but
formally independent
definition of the Weyl chambers
can be based on
a spherical-coordinate reparametrization
$X = \rho\,\sin \phi$ and $Y =
\rho\,\cos \phi$ of the plane.
After such a change of perspective our Calogero-Schr\"{o}dinger
Eq.~(\ref{gular}) is found to be solvable by the separation
of variables. This appeared to be
a decisive discovery in \cite{Calogero}
where the ansatz $\Phi(X,Y)= \psi(\rho) \chi(\phi)$
led to angular equation
 \be
\left ( -\frac{d^2}{d \phi^2} + \frac{9\,C}{\sin^23\,\phi}
\right ) \chi_k(\phi) = \varepsilon_k\,\chi_k(\phi)
\,,\ \ \ k = 0, 1, \ldots\,.
 \label{angular}
 \ee
The
impenetrable
barriers
of Figure \ref{3wwu}
are now localized simply
as lying
along the constant-angle lines
with $\sin 3\,\phi_{critical}=0$.

The split of the two-dimensional plane into six Weyl chambers
$W_{\{ijk\}}$ is presented, in a slightly modified manner, in Figure
\ref{3ww}. The three solid lines of the preceding picture are
reinterpreted as separating those subscripts of the Weyl chambers
(i.e., those triplets of integers $\{ijk\}$) which only differ by a
single elementary transposition. {The} line which represents the
two-body repulsion barrier $\sim (x_1-x_2)^{-2}$ and which stands
for the transposition $1 \leftrightarrow 2$ is marked now by symbol
``$x_3$'', etc.

\subsection{Non-equal barriers vs. loss of solvability}

% asymmetrization

By far the most important formal merit of
the conventional symmetric
(i.e., equal-coupling) Calogero-Schr\"{o}dinger Eq.~(\ref{gular})
is that after the separation of variables,
both the angular Eq.~(\ref{angular})
and its more common harmonic-oscillator radial-equation partner
prove solvable in closed form. This yields the spectrum
 \be
E=E^{}_{n,k}=\sqrt{\frac{3}{8}}\omega\,(4n+6k+ 6\alpha + 5)
,\ \ \ \  \ \ \ \alpha=\frac{1}{2}\sqrt{1+4C}>0, \ \ \ \ \ n,k=0,
1, \ldots\,
\label{dots}
 \ee
exhibiting multiple accidental degeneracies plus, in addition,
a global sextuple degeneracy
reflecting the split of the $X-Y$ plane in the six
dynamically independent Weyl chambers.
In this sense, every energy level of Eq.~(\ref{dots})
can be assigned six independent wave functions.
Due to the impenetrability
of the barriers, every one of these functions
may be chosen as exclusively
supported by one of the eligible Weyl chambers,
 \be
 \Phi=\Phi_{n,k}^{\{ijk\}}(X,Y)\,\,
 \left \{
 \begin{array}{ll}
 \neq 0\,\ \ {\rm for}  & (X,Y) \in W_{\{ijk\}}
 \\
 = 0\,\  & {\rm otherwise}\,. \ \
 \ea
 \right .
 \label{sedu}
 \ee
In opposite direction one could argue that
the admissibility of the choice (\ref{sedu})
(where the index $ijk$ runs over all six permutations of triplet $123$)
is a consequence of the impenetrability of the Weyl-chamber
boundaries. Implying that every one of the levels $E^{}_{n,k}$
of Eq.~(\ref{dots}), ``locally'' (i.e., inside $W_a$) degenerate or not,
is also six times degenerate, in addition, ``globally'' (i.e., inside $\mathbb{R}^2$).

Let us now remind the readers that
our present project is
aimed, first of all, at
an unfolding of the ``global'' spectral degeneracy.
The goal is to be achieved
by means of
an asymmetrization of the
Hamiltonian.
In the
$A=3$ special case
one of our guiding ideas is that
the strength of the singularities of Hamiltonian ~(\ref{CaHa})
is only controlled
by a single coupling constant $C$.

Preliminarily, one of the most natural tentative generalizations
of model
(\ref{gular}) might be sought via an {\it ad hoc\,}
replacement
of the single coupling constant $C$
by a
triplet of independent parameters,
 \be
 \frac{C}{(\sqrt{3}Y-X)^{2}} \ \to \
 \frac{Z_1}{(\sqrt{3}Y-X)^{2}}
 \,,
 \ \ \
 \frac{C}{(\sqrt{3}Y+X)^{2}} \ \to \
 \frac{Z_2}{(\sqrt{3}Y+X)^{2}}\,,
 \ \ \
 \frac{C}{X^2} \ \to \ \frac{Z_{3}}{X^2}\,.
  \label{zinq3}
 \ee
The denominators in (\ref{zinq3}) and, hence,
also the localization of
all of
the Weyl-chamber boundaries would remain the same.
In combination
with
the loss of symmetry,
their impenetrability
would imply
the loss of the coincidence of the six
Weyl-chamber-related
sub-spectra $\{E^{{\{ijk\}}}_{n,k}\}$.

As long as the
potentials inside the chambers can be now
treated as a set of independent, isolated
two-dimensional quantum dots,
only the accidental ``local'' degeneracy
between the
bound-state sub-spectra
will survive.

\begin{rem}
\label{mezilemma} The sextuple degeneracy of the energy levels
(\ref{dots}) can be unfolded using the three-parametric
asymmetrization (\ref{zinq3}) of the reduced Calogero-like
Schr\"{o}dinger Eq.~(\ref{gular}). {In such a case, the standard
proof of the exact solvability of the model will not apply. In the
future, indeed, the system might still be found solvable. On the
present level of knowledge, unfortunately, the full exact
solvability has to be declared, with all probability, lost.}
\end{rem}
{Incidentally, the latter methodical uncertainty may be related,
perhaps, to the well known fact that in several generalized
Calogero-type models the not quite expected exact solvability has
been found as restricted to a {\em finite\,} subset of certain
anomalous, ``anharmonic'' bound states \cite{qesa}. A connection of
such an incomplete, ``quasi-exact'' form of solvability with our
present approach is nontrivial. Although the possible comparisons
already lie beyond the scope of our present study, interested
readers may find more details and inspiration, say, in the
comprehensive review paper \cite{qesb} published, in 2005, as a part
of the Birthday Issue dedicated to Francesco Calogero on the
occasion of his 70th birthday.}

Another observation {related to Remark \ref{mezilemma}} is that up
to an accidental occurrence of a degeneracy, and in a sharp contrast
{to} the symmetric scenario, virtually all of the bound states of
the asymmetric model with unfolded spectrum will {\em necessarily\,}
have to obey the single-chamber-support restriction (\ref{sedu})
imposed upon{} wave functions.

\subsection{Reinstallation of solvability: Asymmetric barriers}

% Solvable asymmetric $A=3$ model

The validity and constructive nature of Theorem \ref{preteor} at $A=2$
might and have to be extended to
all of the models with $A>2$. In particular,
the degeneracy of the Calogero's $A=3$ spectrum has to be
unfolded. Obviously, this degeneracy can be
attributed to the
particle-permutation symmetry of the Hamiltonian. Thus, as long as we wish to
weaken the degeneracy
in a way different from the preceding tentative
recipe~(\ref{zinq3}),
we have to find another method of breaking that symmetry.

Naturally, we wish to do so
without the loss of the exact, closed-form solvability of
the original symmetric model.
The overall strategy of so-doing
has already been indicated above.
We have to pick up
six different coupling strengths $C_{\{ijk\}}$
and introduce
six reparametrized
versions
of
the Calogero's exactly solvable
Hamiltonian of Eq.~(\ref{CaHa}),
$H \to H_{\{ijk\}}=H^{(3)}(\omega,C_{\{ijk\}})$.
Now, although all of these operators
are defined over the whole, unrestricted
range $\mathbb{R}^2$ range of the coordinates,
the impenetrability of the barriers enables us
restrict their respective actions
to the functions over a single Weyl chamber.

In this manner, we managed to complement
the properly asymmetrized $A=2$ model
(\ref{ainq2})
by its exactly solvable $A=3$ descendant.

\begin{thm}
\label{3theo}
Calogero-like quantum system with Hamiltonian
 \be
 H^{(3)}(\omega,\vec{C}) = \bigoplus_{\{ijk\}}
 H^{(3)}(\omega,{C_{\{ijk\}}}) \,
 \label{devatenact}
 \ee
where
 \be
 \ \ \ \
 H^{(3)}(\omega,C_{\{pqr\}}) = - \sum_{m=1}^{3}\
 \frac{\p^2}{\p{x_m}^2}
 + \sum_{ m<n=2}^{3}\left [
\frac{1}{8}\,\omega^2   \,(x_m-x_n)^2
 +\frac{2C_{\{pqr\}}}{
 (x_m-x_n)^{2}}\right ]\,,\ \ \ \
 x_p<x_q<x_r\,.
  \label{ainq3}
 \ee
is exactly solvable.
\end{thm}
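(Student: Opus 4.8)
The plan is to reduce the exact solvability of the direct sum (\ref{devatenact}) to that of its six individual summands, and then to observe that restriction to a single Weyl chamber leaves intact the separation of variables which underlies the solvability of the symmetric model. First I would unpack the apparent redundancy in (\ref{ainq3}): although the summand $H^{(3)}(\omega,C_{\{pqr\}})$ is written as a full Calogero operator on the unrestricted plane $\mathbb{R}^2$, only the \emph{single} coupling constant $C_{\{pqr\}}$ enters it. Hence on its own chamber $W_{\{pqr\}}$ it coincides, term by term, with the conventional symmetric operator (\ref{gular}) carrying that one coupling, all three two-body barriers sharing the same strength and lying along the constant-angle walls $\sin 3\phi=0$ that bound the chamber.

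Next I would repeat the separation of variables verbatim. With $X=\rho\sin\phi$, $Y=\rho\cos\phi$ and $\Phi=\psi(\rho)\chi(\phi)$, the angular factor obeys (\ref{angular}) with $C\to C_{\{pqr\}}$, now posed on the single wedge $\phi\in(\phi_0,\phi_0+\pi/3)$ bounded by two consecutive zeros of $\sin 3\phi$. This is a P\"{o}schl--Teller problem on an interval with strongly singular endpoints; the boundary condition of (\ref{constra})-type, suppressing the dominant singular solution at each wall, renders the operator self-adjoint and selects a discrete, explicitly known angular spectrum $\varepsilon_k$ with eigenfunctions $\chi_k$. The radial factor then solves a harmonic oscillator carrying the effective centrifugal term fixed by $\varepsilon_k$, again in closed form, and the two closed-form ingredients reproduce the level formula (\ref{dots}) with $C$ replaced by $C_{\{pqr\}}$. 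Each summand, restricted to its chamber, is therefore exactly solvable.

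Finally I would assemble the six pieces. Because the $\sim X^{-2}$ singularity is impenetrable there is no tunneling across any wall, so the six restricted operators act on mutually orthogonal summands of $L^2(\mathbb{R}^2)=\bigoplus_{\{ijk\}}L^2(W_{\{ijk\}})$, the boundaries carrying zero measure. The spectrum of (\ref{devatenact}) is then simply the union of the six closed-form sub-spectra, with eigenfunctions given by the chamber-supported $\Phi^{\{ijk\}}_{n,k}$ of (\ref{sedu}) extended by zero outside $W_{\{ijk\}}$. This is exactly the direct-sum structure already established at $A=2$ in Theorem~\ref{preteor}, now carried over to the plane.

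The step I expect to be the main obstacle is the angular boundary-value problem on a single wedge: one must check that the P\"{o}schl--Teller eigenfunctions $\chi_k$ singled out by the singular walls form a complete orthonormal basis of $L^2(\phi_0,\phi_0+\pi/3)$, so that, tensored with the complete radial harmonic-oscillator family, they exhaust $L^2(W_{\{pqr\}})$ and hence $L^2(\mathbb{R}^2)$. Everything else is a chamber-by-chamber repetition of the standard symmetric computation together with routine direct-sum bookkeeping.
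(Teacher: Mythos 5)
Your proposal is correct and follows essentially the same route as the paper: each summand of the direct sum is the conventional single-coupling Calogero operator, whose known closed-form (separated) solutions survive restriction to the impenetrable Weyl chamber $W_{\{pqr\}}$, and the six decoupled chamber problems are then reassembled by direct-sum bookkeeping over $L^2(\mathbb{R}^2)=\bigoplus L^2(W_{\{ijk\}})$. The paper's own proof is just a terser statement of this argument; your added detail on the angular P\"{o}schl--Teller problem and the completeness of the chamber bases fills in steps the paper delegates to its earlier discussion of the symmetric model.
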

\begin{proof}
Such a model is a direct sum over all six permutations of
integers $\{123\}$. Every one of its
independent components (\ref{ainq3})
is obtained from the conventional model (\ref{CaHa})
after the elimination of the center-of-mass degree of freedom.
Hence, up to the loss of the ``global'' degeneracy,
the bound-state solutions remain ``locally'' unchanged
even after
the range of coordinates gets restricted from
${\mathbb{R}^{A-1}}$ to $W_{\{pqr\}}$,
and after the coupling is
chosen equal to $C_{\{pqr\}}$.
\end{proof}
Six independent
coupling constants $C_{\{pqr\}}$ in Eq.~(\ref{ainq3})
({\it alias\,} the ``colors'' of the chambers)
may but need not be all mutually different. Serving for
an exhaustive classification
of all of the asymmetric generalizations of the symmetric
model.
The classification can be given the form of a
list of all of the possible nontrivial colorings
of the six wedges in Figures \ref{3wwu} or \ref{3ww}.

As long as the explicit formulation of the list of the colorings
is conceptually elementary, it
may be left to the readers.
Let us only note that
besides the original Calogero's fully symmetric case
(using just a single color $C$) and besides the
maximally asymmetrized
six-color system admitting a maximal reduction of
the degeneracy,
some of the other special colorings might also prove to be of
an enhanced interest in applications.
For example,
we could be interested in
the form of boundaries
between the neighboring Weyl chambers $W_a \subset \mathbb{R}^2$
and $W_b \subset \mathbb{R}^2$.
The reason is that these boundaries in $\mathbb{R}^2$
are just pull-downs
of the microscopic particle-particle
repulsion-force singularities in $\mathbb{R}^3$.
Various ``realistic'' sub-classifications might be then
motivated by the need of being given a phenomenological input knowledge of
dynamics on the microscopic level,
with the emphasis put upon the difference between
the
symmetric barriers (when $C_a=C_b$) and
their asymmetric repulsion-force alternatives
(when $C_a \neq C_b$).
Thus, in particular,
whenever only
some of the particle-particle
barriers remain symmetric,
we could speak about
hybrid models. In these cases, the unfolding of the ``global'' spectral
degeneracy would be just partial.

Alternatively one could demand that even when {\em all\,} of
the particle-particle barriers
become left-right
asymmetric,
the breakdown of symmetry
may be incomplete,
leading again
to the mere partial removal of the
model's spectral degeneracy.
Once we recall
Figure \ref{3ww} such an arrangement might be immediately visualized
via coloring(s) in which
one only uses a minimum (i.e., say, two) different colors.
In these cases,
the wave function of the
system can (i.e., are allowed to) remain non-vanishing
in more than one Weyl chamber, i.e., in principle,
in
the whole
union of all of the
Weyl chambers carrying the same color.

\section{Spectral degeneracy and its unfolding beyond $A=3$\label{4revisited}}

\subsection{Exact solvability}

The systems with different $A$s will
share formal analogies.
In all of them, in particular,
the Weyl chambers may be
numbered
by the configurations {\it alias\,} permutations of the particles.
We will abbreviate
$a=\{i_1,i_2,\ldots,i_A\}$ and denote $W_a$ if and only if
$x_{i_1}<x_{i_2}<\ldots <x_{i_A}$.
Next,
the
singularities
will occur, in $\mathbb{R}^A$, whenever
$x_i = x_j$.
After
we eliminate the center of mass \cite{haka0},
these singularities will
cut and split also the reduced space $\mathbb{R}^{A-1}$
into its
Weyl-chamber subsets $W_{a}$.
The  points of their boundaries  $\p W_{a}$
are
precisely the points at which
the microscopic particle-particle interaction
becomes singular.

For all of these reasons, many
above-mentioned $A=3$ considerations can be generalized to
any $A$. In particular,
the generalizations of
Eq.~(\ref{devatenact}) and of
Theorem \ref{3theo} to any $A$ are immediate.
\begin{thm}
\label{Atheo}
At any integer $A \geq 2$, the Calogero-like quantum system with Hamiltonian
 \be
 H^{(A)}(\omega,\vec{C}) = \bigoplus_{\{i_1i_2\ldots i_A\}}
 H^{(A)}(\omega,{C_{\{i_1i_2\ldots i_A\}}}) \,
 \label{Adevatenact}
 \ee
where
 \ben
 H^{(A)}(\omega,C_{\{p_1p_2\ldots p_A\}}) =
 \ \ \ \
 \ \ \ \
 \ \ \ \
 \ \ \ \
 \ \ \ \
 \ \ \ \
 \ \ \ \
 \ \ \ \
 \ \ \ \
 \ \ \ \
 \ \ \ \
 \een
 \be
 = - \sum_{m=1}^{A}\
 \frac{\p^2}{\p{x_m}^2}
 + \sum_{ m<n=2}^{A}\left [
\frac{1}{8}\,\omega^2   \,(x_m-x_n)^2
 +\frac{2C_{\{p_1p_2\ldots p_A\}}}{
 (x_m-x_n)^{2}}\right ]\,,\ \ \ \
 x_{p_1}<x_{p_2}<\ldots <x_{p_A}\,
  \label{ainqA}
 \ee
is exactly solvable.
\end{thm}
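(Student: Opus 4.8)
The plan is to imitate, essentially verbatim in spirit, the argument already used at $A=3$ in Theorem~\ref{3theo}, treating the direct sum (\ref{Adevatenact}) one summand at a time and reducing each summand to the classically solvable symmetric Calogero model (\ref{CaHa}). First I would justify that (\ref{Adevatenact}) really is a genuine orthogonal direct sum by reusing the impenetrability argument established at $A=2$ in Section~\ref{revisited}: each singular wall $x_i=x_j$ carries a repulsion $\sim (x_i-x_j)^{-2}$ with coupling $C_{\{p_1p_2\ldots p_A\}}>-1/4$, and exactly as in the radial analysis leading to the suppression condition (\ref{constra}) this places every boundary hyperplane in the limit-point situation. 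Hence the tunneling across $\partial W_a$ vanishes and the operator on $\mathbb{R}^{A-1}$ splits, with no boundary coupling, into $A!$ mutually independent pieces indexed by the Weyl chambers. This decoupling is precisely what legitimises choosing the couplings $C_{\{p_1p_2\ldots p_A\}}$ independently from one chamber to the next.

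Next, for a fixed chamber $W_a$ I would observe that the summand $H^{(A)}(\omega,C_a)$ restricted to $W_a$ is unitarily equivalent to the \emph{symmetric} Calogero Hamiltonian (\ref{CaHa}) with the single coupling $C=C_a$, restricted to the reference chamber $x_1<x_2<\cdots<x_A$. Indeed, the functional form inside every chamber is literally that of (\ref{CaHa}); the coordinate permutation mapping $W_a$ onto the reference chamber is an isometry of $L^2$, and the standard Jacobi change of variables (the $A$-body analogue of (\ref{tri}) and (\ref{ular})) eliminates the center of mass and carries the problem to $\mathbb{R}^{A-1}$. I would then invoke the known exact solvability of the symmetric model: after passage to hyperspherical coordinates the reduced equation separates into a radial harmonic-oscillator part and a closed-form-solvable angular part (the $A=3$ instance being (\ref{angular})), so each summand inherits an explicit spectrum together with a complete set of eigenfunctions.

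The one point needing genuine care, and which I expect to be the main obstacle, is the consistency of the single-chamber Dirichlet conditions with the known symmetric eigenfunctions. Here I would note that the symmetric Calogero eigenfunctions carry the prefactor $\prod_{i<j}|x_i-x_j|^{\ell_a+1}$ with $\ell_a=\ell(C_a)$ from (\ref{fu}), which vanishes on every wall $x_i=x_j$; this is exactly the $A$-body version of (\ref{constra}) and shows that the symmetric eigenfunctions \emph{already} obey the vanishing condition on $\partial W_a$. Consequently the zero-extension of such an eigenfunction from $W_a$ to $\mathbb{R}^{A-1}$ is an eigenfunction of the restricted operator, with no spillover into neighbouring chambers, exactly as recorded in (\ref{sedu}) and in Lemma~\ref{predlemma}. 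The only feature genuinely new beyond $A=3$ is that $\partial W_a$ now contains lower-dimensional strata where three or more coordinates coincide; but these have higher codimension and zero measure and do not enter the essential-self-adjointness analysis, which is controlled entirely by the codimension-one walls handled as in the $A=2$ radial case.

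Finally, assembling the pieces: since each of the $A!$ summands is exactly solvable with an explicit spectrum, and since they are mutually orthogonal components of a direct sum, the spectrum of $H^{(A)}(\omega,\vec{C})$ is the union of the $A!$ chamber sub-spectra and the single-chamber-supported functions constitute a complete eigenbasis. This establishes the claimed exact solvability of (\ref{Adevatenact}) and, as a by-product, identifies when the ``global'' degeneracy is fully or only partially unfolded, according to how many of the ``colours'' $C_{\{p_1p_2\ldots p_A\}}$ coincide.
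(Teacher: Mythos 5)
Your argument follows essentially the same route as the paper, which proves the $A=3$ case (Theorem~\ref{3theo}) by noting that the Hamiltonian is a direct sum of chamber-restricted copies of the conventional solvable model (\ref{CaHa}) and then declares the generalization to arbitrary $A$ immediate; your version merely supplies the details (permutation isometry to the reference chamber, vanishing of the symmetric eigenfunctions on the walls, harmlessness of the higher-codimension strata) that the paper leaves implicit. The only caveat is your assertion that every wall is automatically in the limit-point case: for $-1/4<C<3/4$ it is limit-circle and the decoupling rests on the boundary-condition convention (\ref{constra}) that the paper explicitly adopts, so this is a matter of phrasing rather than a gap.
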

At any $A$, in other words, the ultimate
asymmetrized Calogero-like Hamiltonian
remains
formally defined as a direct sum
of its mutually non-interacting
particle-ordering-dependent components.
Every such a model
remains solvable. Even the
limiting transition to the symmetric special case will be smooth.

The only remaining open problem
seems to be the formulation of an explicit list of
all of the structurally non-equivalent colorings $C_{\{p_1p_2\ldots p_A\}}$.
Thus, the qualitative characterization
of the set of the repulsive-barrier coupling
constants
is a crucial task.
Obviously, the task requires just a manageable
visualization
of the
$(A-1)-$dimensional space of the relative coordinates
after its decomposition
${\mathbb{R}}^{A-1}=\bigcup\,W_{a}$.

In such a framework, we are now prepared to
give
our preceding result
an alternative formulation.
\begin{thm}
\label{theoremone} At any $A$, the exact solvability of the
symmetric or asymmetric Calogero-like model of Eqs.~(\ref{CaHa}) or
(\ref{Adevatenact}) can be achieved via an explicit specification of
the couplings $C=C_{a}$ which must {only} be the same at all of the
inner sides of the boundaries of $W_a$.
\end{thm}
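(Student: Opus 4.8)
The plan is to read Theorem \ref{theoremone} not as a new analytic result but as a geometric restatement of the direct-sum construction already secured in Theorems \ref{3theo} and \ref{Atheo}, so that the real work is translating the solvability condition into the language of chamber walls. First I would invoke the impenetrability of the ${\cal O}(X^{-2})$ barriers, established at $A=2$ by Lemma \ref{predlemma} and relied upon throughout: since no probability current crosses any coincidence wall $x_i=x_j$, the operator (\ref{Adevatenact}) genuinely splits as a direct sum and its spectral problem reduces to the mutually independent problems on the separate Weyl chambers $W_a$, each eigenfunction being supported in a single chamber as in (\ref{sedu}). Hence it suffices to characterize exact solvability chamber by chamber.

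Second, I would inspect the restriction of $H^{(A)}(\omega,C_a)$ to one chamber $W_a$. The walls bounding $W_a$ are precisely the coincidence hyperplanes bordering that ordering, and along every such wall the repulsive term in (\ref{ainqA}) carries the \emph{one and the same} coefficient $2C_a$. Because a single constant multiplies all two-body barriers in (\ref{ainqA}), the restricted problem is formally identical to the conventional symmetric Calogero model confined to $W_a$, and it inherits the known closed-form spectrum and eigenfunctions verbatim. This is the sufficiency direction and is essentially immediate once the direct-sum reduction is in hand.

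The substantive point is necessity, namely that the coupling \emph{must} be common on all inner sides of $\partial W_a$. Here I would argue by contrast with the failed \emph{ad hoc} recipe (\ref{zinq3}): the standard solution mechanism (separation of variables yielding the angular equation (\ref{angular}), together with its harmonic radial partner) depends on a single coupling multiplying every wall of the chamber, and assigning distinct strengths to the walls of one chamber destroys that structure, as recorded in Remark \ref{mezilemma}. The decisive interpretive step is the phrase \emph{inner sides}: each wall of $W_a$ is shared with a neighbour $W_b$, and impenetrability makes the coupling felt from inside $W_a$ (namely $C_a$) wholly independent of the value $C_b$ felt from inside $W_b$. Uniformity is therefore demanded only among the walls seen from within a \emph{single} chamber, and never across a shared boundary, which is exactly how the ``colorings'' $C_a$ can differ while solvability is preserved.

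I expect the main obstacle to be pinning down the precise sense in which necessity holds. The paper itself (Remark \ref{mezilemma}) is careful not to claim that non-uniform couplings are unsolvable by \emph{any} conceivable method, only that the known closed-form proof fails. The delicate part of the write-up is thus to phrase the characterization of Theorem \ref{theoremone} as solvability \emph{via the established mechanism}, so that the necessity assertion is understood relative to that mechanism and stays consistent with the hedged status of the genuinely three-parameter single-chamber deformation (\ref{zinq3}).
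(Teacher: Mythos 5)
Your proposal is correct and follows essentially the same route as the paper: the paper presents Theorem~\ref{theoremone} as a mere ``alternative formulation'' of Theorem~\ref{Atheo}, whose justification is exactly your argument --- impenetrability of the $X^{-2}$ walls yields the direct-sum decomposition, and each chamber with a single coupling $C_a$ on all of its walls reduces to the conventional solvable Calogero model restricted to $W_a$. Your explicit treatment of the necessity direction (relativizing the ``must'' to the known separation-of-variables mechanism, in line with the hedged Remark~\ref{mezilemma}) is more careful than anything the paper writes down, but it is consistent with, and a faithful elaboration of, the paper's reasoning.
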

 %\noindent
This observation provides the
background and reason why
our preceding constructive analyses
are instructive and
have to be complemented
by their extension beyond $A=3$.

In our ultimate illustration
the choice of $A=4$ does really form a certain
``feasibility bridge'' connecting
the easy models where $A!\leq 6$
with the overcomplicated ones where
$A! \geq 120$.
Thus, it makes sense to show
that also the four-particle Calogero-like asymmetrized models
of Eq.~(\ref{Adevatenact})
admit
a feasible, compact and more or less explicit
descriptive analysis.

\subsection{Weyl-chamber boundaries and their cartography at $A=4$\label{fejed}}

One of the key formal advantages
of the conventional as well as generalized
Calogero-like models with $A\leq 3$
is the separability
of the initial partial differential
Schr\"{o}dinger equation,
i.e., its reducibility
to a set of ordinary differential
eigenvalue problems.
Such a feature is lost at $A\geq 4$
so that the simplest elements of this class with $A=4$
deserve our constructive attention.
Moreover,
in a way
advocated in the literature the choice of $A\leq 3$
as made in the two preceding sections
need not
be sufficiently representative, in a broader methodical context
at least \cite{turbinerc,turbinerf}. As we already mentioned,
the enumeration of the colorings in Figures \ref{3wwu} or \ref{3ww}
is next to trivial.
We found it desirable to
extend our discussion beyond $A=3$, therefore.

\subsubsection{Colorings  at $A=4$.}

At $A=3$ the task of the split
of the plane $\mathbb{R}^2$ by the
sextuplet of the wedge-shaped Weyl chambers
as well as the related classification {\it alias\,} coloring
problems
were comparatively easy to settle. {\em All\,} of the
half-line Weyl-chamber
boundaries shared a common vertex
and
formed a planar star. This made
the coloring of the wedges $W_{\{ijk\}}$ easy, leading immediately
to an exhaustive
classification of all of
the unfoldings of the degeneracies, i.e., of
all of the spectrally non-equivalent models.
Moreover, it appeared sufficient to classify just
the colorings of
the six segments of an auxiliary circle
or, equivalently, of an auxiliary triangle (cf. Figure \ref{3ww}).

{The} situation at $A=4$ is perceivably more complicated. Indeed, we
have to replace, first of all, the $A=3$ change of variables
(\ref{ular}) by its $A=4$ analogue. Nevertheless, the details of the
$A=4$ change of variables as well as an explicit discussion of a
deep algebraic discussion of its consequences will be skipped here,
for three reasons. Besides the first one (viz., the sake of brevity:
their form is well known) and the second one (viz., a redundancy for
our present purposes: the algebraic $A=4$ formulae already become
far from transparent), our third and main reason is that in a way
sampled by Figure \ref{3ww} at $A=3$, it will be {\em sufficient\,}
to know only the topology (i.e., the neighborhoods) of the $A=4$
Weyl chambers. {\em Just\,} in its graphical form.

%
%%\newpage
%
%
%
%\vspace{4cm}
%
%
%
%\begin{figure}[h]                     %instead of \begin{figure}[t]
%\begin{center}                         %instead of \begin{center}
%%\epsfig{file=nab.eps,angle=270,width=0.6\textwidth}
%%
%.\hspace{-5cm}\includegraphics[width=0.41cm,angle=0]{tenisak.png}
%%
%\end{center}                         %instead of \end{center}
%\vspace{-2mm}\caption{Three-dimensional
%Weyl chambers $W_{\{ijkl\}}$ in projection on an eye-guiding
%central sphere (source of the picture: \cite{dek}, author: Tilman Piesk, licence:
%Watchduck - Own work, CC BY 4.0).\label{ucol3ww}}
%\end{figure}
%
%%\vspace{1cm}
%
%%\newpage

\subsubsection{Projection on a central sphere.}

Interested readers may find a deeper insight in the formulae and
multiple deep algebraic structures in the dedicated literature
\cite{turbiner,coxeter,coxeterb}. At $A=4$, in particular, one
suddenly has to deal with as many as $A!=24$ eligible Weyl chambers.
Thus, a replacement of the formulae by their graphical alternatives
seems well founded. It enhances, indeed, the efficiency of the
analysis. In particular, the description becomes facilitated when we
recall and adapt the graphical tricks as used at $A=3$. In the first
step, the centralized dotted-curve auxiliary planar circle of Figure
\ref{3ww} can be replaced,  at $A=4$, by a sphere in three
dimensions.

Interested readers are encouraged to
find a nice picture of such a sphere on the web \cite{dek}.
In this representation one can also
keep the analogy with the planar case of Figure
\ref{3ww} and project the
three-dimensional objects $W_{\{ijkl\}}$ on the spherical triangles
as displayed on the sphere in \cite{dek}.

\subsubsection{Projections on a central cube.}

The role of the
dotted-line triangle of Figure \ref{3ww}
can readily be transferred, at $A=4$,
to a central tetrakishexahedron
of Ref. \cite{dekb}.
A key advantage of such an upgrade of projection is that
it further simplifies the representation.
It
replaces the
projections
of the three-dimensional Weyl chamber pyramids $W_{\{ijkl\}}$
on the spherical triangles
by the flat,
planar triangles of the surface of the tetrakishexahedron
of Ref. \cite{dekb}.
%
%
%%\newpage
%
%
%
%\vspace{4cm}
%
%
%
%\begin{figure}[h]                     %instead of \begin{figure}[t]
%\begin{center}                         %instead of \begin{center}
%%\epsfig{file=nab.eps,angle=270,width=0.6\textwidth}
%%
%.\hspace{-5cm}\includegraphics[width=0.41cm,angle=0]{Tetrakishexahedron.jpg}
%%
%\end{center}                         %instead of \end{center}
%\vspace{-2mm}\caption{Rectified projection
%of 24 Weyl chambers $W_{\{ijkl\}}$ on the
%surface of tetrakishexahedron
%(source: \cite{dekb}, author: T. Piesk, licence:
%Watchduck - Own work, CC BY 4.0)\label{wcol3ww}}
%\end{figure}
%
%
%%.\vspace{1cm}
%
%%\newpage

Although the
colorings of the planar triangular faces of
the tetrakishexahedron
look already feasible,
let us still
introduce another, last simplification
converting the tetrakishexahedron  of Ref.~\cite{dekb}
into a cube, with its six faces divided into quadruplets of
the neighboring $W_a-$representing triangles.
For completeness,
interested readers can find
the three-dimensional
picture of such a cube with subdivided faces
on internet~\cite{dekg}. In Figure \ref{4ww}
we present here its planar, two-dimensional unfold.

%plot({-1/2,1/2,-3/2,3/2,4*Heaviside(x+1/2)-2,4*Heaviside(x-1/2)-2,
%4*Heaviside(x-3/2)-2,4*Heaviside(x-5/2)-2,4*Heaviside(x-7/2)-2,
%x,x+1,-x,x-1,-x+1,-x-1,x-2,x-3,
%x-4,x-5,-x+2,-x+3,-x+4,-x+5},x=-0.8..4.3,view=-1.7..1.7,color=black,
%axes=none,scaling=constrained);
\begin{figure}[h]                     %instead of \begin{figure}[t]
\begin{center}                         %instead of \begin{center}
\epsfig{file=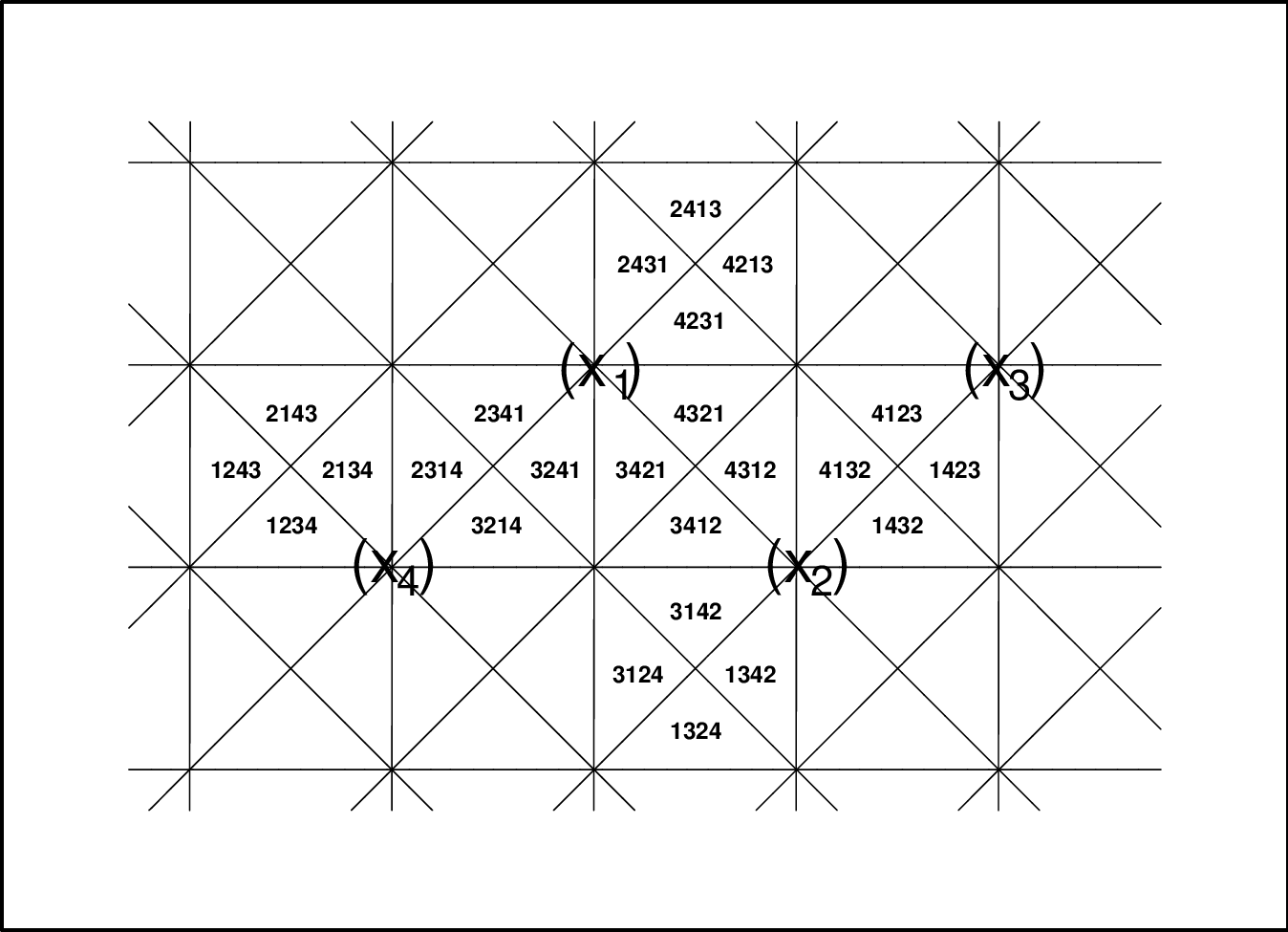,angle=270,width=0.7\textwidth}
\end{center}                         %instead of \end{center}
\vspace{-2mm}\caption{Twenty four Weyl chambers $W_{\{ijkl\}}$
in projection on a central cube.
 \label{4ww}}
\end{figure}

\subsubsection{Numbering by permutations.}

 %\noindent
For our present purposes we may cut and {display} the surface of the
cube in the planar-cartography representation of Figure~\ref{4ww}.
We only have to remember that such a planar picture might require,
for some purposes, a reconstruction of the three-dimensional surface
of the cube by bending and gluing some of the edges.

The union
${\mathbb{R}}^{A-1}=\bigcup\,W_{a}$
has to be taken over all of the
permutations of the particles.
It is only important for us to know that
{\em before\,} the projection as sampled in Figure \ref{4ww}
the geometric shape of the individual three-dimensional $A=4$
Weyl chambers $W_{\{ijkl\}}$
will have again the pyramidal form, with their top fixed in the origin.
We may expect that in comparison with the three-particle model,
the generalization of the ``cartography'' to $A=4$
(and, in principle, to any larger $A>4$)
is straightforward.

After the projection on the cube
we have to deal with the
planar representation of the surface.
The
vertices of the cube coincide with
the intersections of the six
singularity lines
(i.e., of the three lines standing for the adjacent edges and
of the other three ones
for the face diagonals).
One only has to remember that these vertices
were defined as the octuplet of points
in $\mathbb{R}^3$ at which the circumscribed
auxiliary sphere is intersected by the four straight lines
representing
the original particle-coordinate axes of $x_1$, $x_2$, $x_3$ and $x_4$.
These intersections are also identified in Figure \ref{4ww}.

{In} this picture the surface of the inscribed cube is represented
by the six concatenated squares. The symbols $(x_j)$ mark the
intersections of{} outwards-running axes with the sphere and with
the cube (at its vertices). The planar triangular projections of all
of the twenty four three-dimensional Weyl chambers $W_{\{ijkl\}}$ on
the walls of the cube are finally numbered here by the quadruplets
$ijkl $ of integers such that $-\infty < x_i<x_j<x_k<x_l< \infty$.
In this manner we obtained again a transparent schematic
two-dimensional representation of the Calogero-model kinematics at
$A=4$.

In the same picture the straight lines mark the singularities (i.e.,
the barriers) and the numbers {$ijkl$} characterize the (projections
of the) $A=4$ chambers $W_{\{ijkl\}}$. One can easily check that
whenever one crosses the line, the quadruplet of integers {$ijkl$}
is changed {just} by an elementary permutation of its neighboring
elements. This permutation (i.e., a re-arrangement of the particles)
can be perceived as a consequence of the crossing of the barrier
{\it alias\,} singularity line.

As a serendipitious byproduct of these topological considerations
we may deduce the following, not quite expected
combinatorial result.

\begin{lemma}
\label{lemmajedna}
Once we
localize the positive
half-line intersections $x_1$, $x_2$, $x_3$ and $x_4$
on
the auxiliary cube's surface,
the allocation of the
permutations {\rm ijkl} to the triangles of the cube's surface is
unique.
\end{lemma}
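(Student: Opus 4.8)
The plan is to exploit the fact that the twenty-four triangles are exactly the chambers of the Coxeter arrangement of the permutation group $S_4$ acting on the reduced space $\mathbb{R}^3$: the six singularity planes $x_a=x_b$ cut the circumscribed sphere into $4!=24$ pieces, which are then projected onto the cube of Figure~\ref{4ww}. First I would record two elementary consequences of this identification. Since the cube is centrally symmetric and the $+x_i$ and $-x_i$ directions are antipodal, fixing the four positive vertices $x_1,x_2,x_3,x_4$ automatically fixes the four remaining (negative) vertices as their antipodes. Moreover every triangle is of the shape (corner, corner, face-centre), so its base is a cube edge and it touches exactly two cube vertices, one positive and one negative; the positive one it meets is forced to carry the largest coordinate and the negative one the smallest. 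Hence the vertex data alone already pins down, for every triangle, the first and the last entry of its quadruple $ijkl$.

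Next I would reduce the only remaining freedom, namely the order of the two middle entries, to the already-settled $A=3$ situation. The six triangles meeting at a positive vertex $x_l$ are precisely those whose quadruple ends in $l$; they form the link of that vertex, and this link is nothing but a hexagonal Weyl-chamber configuration (Figures~\ref{3wwu} and~\ref{3ww}) for the three surviving particles $\{1,2,3,4\}\setminus\{l\}$. The three cube-edge neighbours of $x_l$ are exactly the negative vertices $-x_i,-x_j,-x_k$, i.e. all of them except the antipode $-x_l$, so their locations, and therefore their cyclic order around $x_l$, are already determined by the hypothesis. By the $A=3$ analysis of Section~\ref{3revisited}, once these three reference directions are placed the labelling of the surrounding six wedges by the six orderings of $\{i,j,k\}$ is unique; in particular the order of the middle pair of each quadruple $(\ast\ast\ast)\,l$ is forced.

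Finally I would assemble the global statement. Because the last entry of a quadruple selects one and only one positive vertex, the twenty-four triangles split into four disjoint hexagonal links, one around each positive vertex, and the previous step shows that the labelling of each link is uniquely determined by the fixed vertex data. Thus every triangle receives exactly one label and the whole allocation is unique. Consistency across the walls that join different links, namely the cube edges and the half-diagonals emanating from the negative vertices, need not be checked separately: grouping the triangles instead around the negative vertices reproduces the same hexagonal picture for the condition ``smallest coordinate $=i$'', so the adjacency-by-elementary-transposition rule holds symmetrically throughout.

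I expect the main obstacle to be the rigorous verification, in the second step, that the cyclic order of the three neighbouring negative vertices around a given positive vertex is genuinely fixed \emph{with the correct orientation}, and that the $A=3$ labelling carries no residual reflection ambiguity once that order is prescribed. This is the one point where the concrete geometry of the cube projection, rather than pure combinatorics, has to be invoked; everything else is bookkeeping that follows from the Coxeter-complex structure already implicit in Figures~\ref{3ww} and~\ref{4ww}.
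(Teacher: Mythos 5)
Your proposal is correct, but it proves the lemma by a genuinely different and considerably more explicit route than the paper does. The paper offers no structured proof at all: Lemma~\ref{lemmajedna} is presented as a ``serendipitous byproduct'' of the graphical construction, justified only by the remark that one can ``easily check'' on Figure~\ref{4ww} that crossing any singularity line changes $ijkl$ by an elementary transposition of neighbouring entries, together with the observation that the axis intersections can be identified with the cube's vertices so that the topological correspondence between chambers and numbered triangles ``is achieved and guaranteed.'' Your argument replaces this inspection by two verifiable combinatorial facts: (i) each chamber closure $\overline{W}_{\{ijkl\}}$ contains exactly two cube vertices among the eight axis intersections, namely the positive vertex $x_l$ and the negative vertex $-x_i$ (this is the extreme-ray computation for the $A_3$ arrangement, and it pins down the outer entries of every label from the hypothesis alone), and (ii) the six triangles in the link of a positive vertex reproduce the $A=3$ star of Section~\ref{3revisited}, whose labelling is unique once the bounding half-axes are placed, which resolves the middle pair. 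The one point you flag as a possible obstacle --- that the cyclic order around $x_l$ carries no reflection ambiguity --- is in fact harmless: the face centre sitting between the spokes to $-x_j$ and $-x_k$ is the centre of the unique cube face containing $x_l$, $-x_j$ and $-x_k$, and that face also contains exactly one further positive vertex, which identifies the face centre and hence the second-largest coordinate unambiguously. The trade-off is clear: the paper's figure-based check is shorter and matches its declared intention to work ``just in graphical form,'' while your link-of-a-vertex reduction is rigorous, makes the inductive relation to the $A=3$ case explicit, and would extend verbatim to the allocation problem for general $A$, where inspection of a figure is no longer available.
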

This observation returns us to the model with $A=3$
where we marked, in
Figure \ref{3ww}, also the projections of
the triplet of the original Cartesian particle-coordinate axes
$x_1$, $x_2$ and $x_3$.
These lines intersected the auxiliary circle at the six
``barrier-sampling'' points.
At $A=4$ the idea remains applicable because now,
the intersections of the sphere
with the axes $x_1$, $x_2$, $x_3$ and $x_4$ form
an octuplet of the ``marked''
points in the diagram of Figure \ref{4ww}.

We may identify the latter points with
the vertices of our auxiliary cube defined as inscribed in the
auxiliary sphere.
This enables us to
project the
boundaries of the separate Weyl chambers on the walls of the cube
yielding the singularity lines in Figure \ref{4ww}.
Summarizing,
the
twenty four
Weyl chambers are mapped on the set of
spherical triangles
covering the auxiliary sphere.
They are, incidentally, rectangular.
Their planar projections
cover the surface of the above-mentioned
inscribed cube, with the rectangularity preserved.
What is important is that the one-to-one correspondence
between the topology of
the three-dimensional
Weyl chambers
and the topology of the
``numbered'' triangles in Figure \ref{4ww} %and  \ref{col4ww}
is
achieved and guaranteed.

%plot({-1/2,1/2,-3/2,3/2,4*Heaviside(x+1/2)-2,4*Heaviside(x-1/2)-2,
%4*Heaviside(x-3/2)-2,4*Heaviside(x-5/2)-2,4*Heaviside(x-7/2)-2,
%x,x+1,-x,x-1,-x+1,-x-1,x-2,x-3,
%x-4,x-5,-x+2,-x+3,-x+4,-x+5},x=-0.8..4.3,view=-1.7..1.7,color=black,
%axes=none,scaling=constrained);
\begin{figure}[h]                     %instead of \begin{figure}[t]
\begin{center}                         %instead of \begin{center}
\epsfig{file=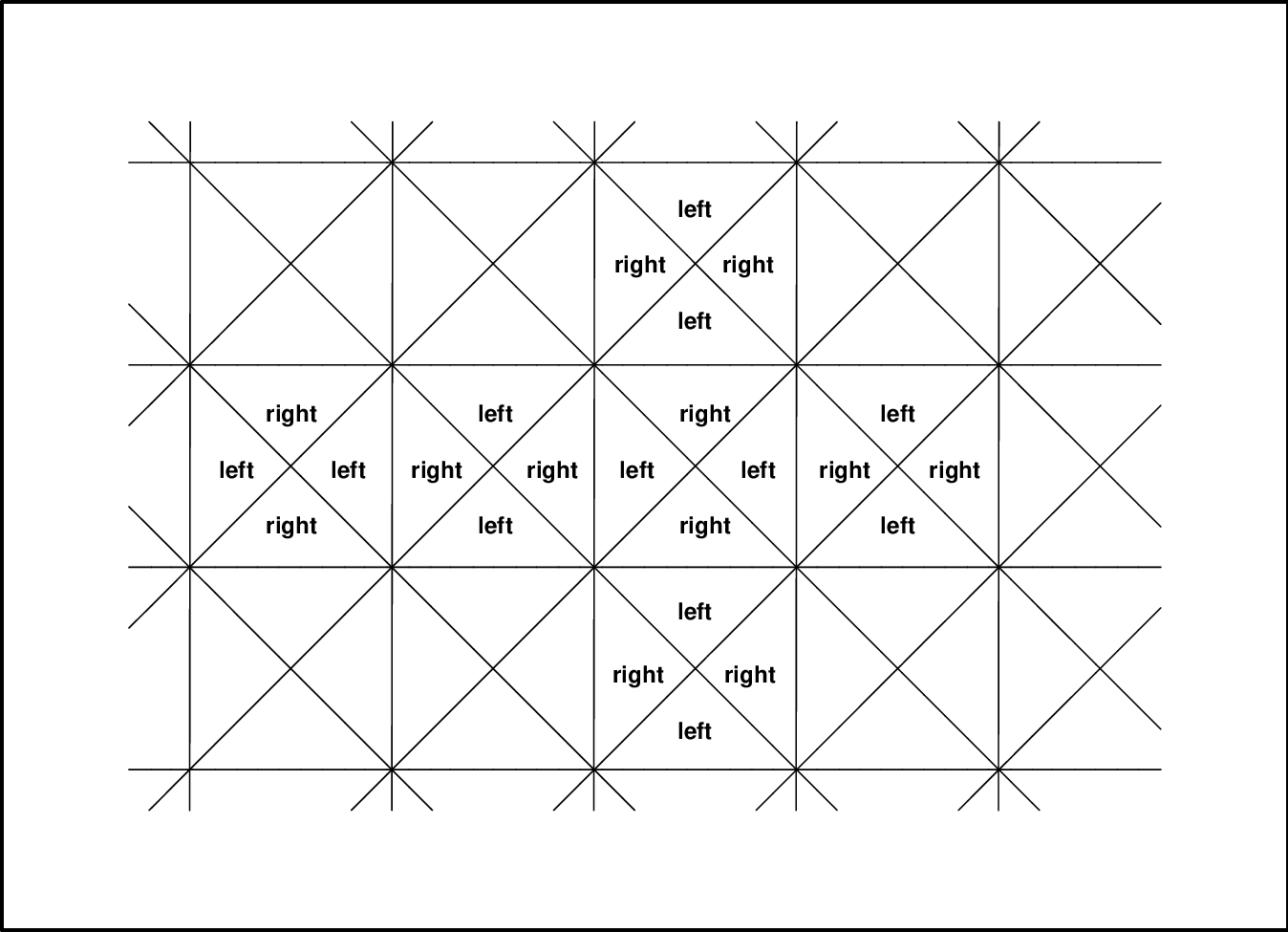,angle=270,width=0.7\textwidth}
\end{center}                         %instead of \end{center}
\vspace{-2mm}\caption{The $A=4$ coloring
with unique (viz., left-right)
asymmetry of {\em all\,} of the barriers.
 \label{col4ww}}
\end{figure}

\subsection{Alternative cartography at $A=4$\label{fexjed}}

%\subsubsection{...\label{feyjed}}

 %\noindent
Let us repeat that
after a routine elimination of the
center-of-mass degree of freedom
the $A=4$ constructions
have to be simplified
by the reduction of the initial partial
differential Schr\"{o}dinger equation
in four coordinates
to its three-dimensional
effective version \cite{haka4}.
As long as we are mainly
interested in the structure
of the
decomposition of ${\mathbb{R}}^3$
into a union of the Weyl chambers,
we will not need the explicit
formulae for the wave functions.
These formulae
may already be fairly complicated, see \cite{turbiner,Brink}.
Fortunately, our present task is just the specification of the
Weyl chamber or chambers $W_a$ in which
the wave function of
a relevant bound state
can be nontrivial, $\psi_a \neq 0$.

Such a specification should be based on a visualization of the
decomposition of ${\mathbb{R}}^3$ and on the phenomenologically
motivated assignment of the equal or different couplings $C=C_a$ to
the respective chambers $W=W_a$ at all of the position permutations
$a=\{i_1,i_2,i_3,i_4\}$. Naturally, an exhaustive and systematic
account of all of the possibilities would already be too long. For a
selection of a sample we propose to combine a ``minimal'' nontrivial
coloring (using just two colors $C=C_{left}\neq C_{right}$) with a
``maximal'' asymmetry of the barriers (meaning that none of the
barriers $\sim (x_i-x_j)^{-2}$ remains symmetric, i.e., invariant
under the exchange of particles $x_i \longleftrightarrow x_j$). {A}
sample of the maximally asymmetric arrangement (using just two
colors called ``left'' and ``right'') is displayed in Figure
\ref{col4ww}.

\subsubsection{Projections of the set of Weyl chambers on
tetrahedron}

\begin{figure}[h]                     %instead of \begin{figure}[t]
\begin{center}                         %instead of \begin{center}
\epsfig{file=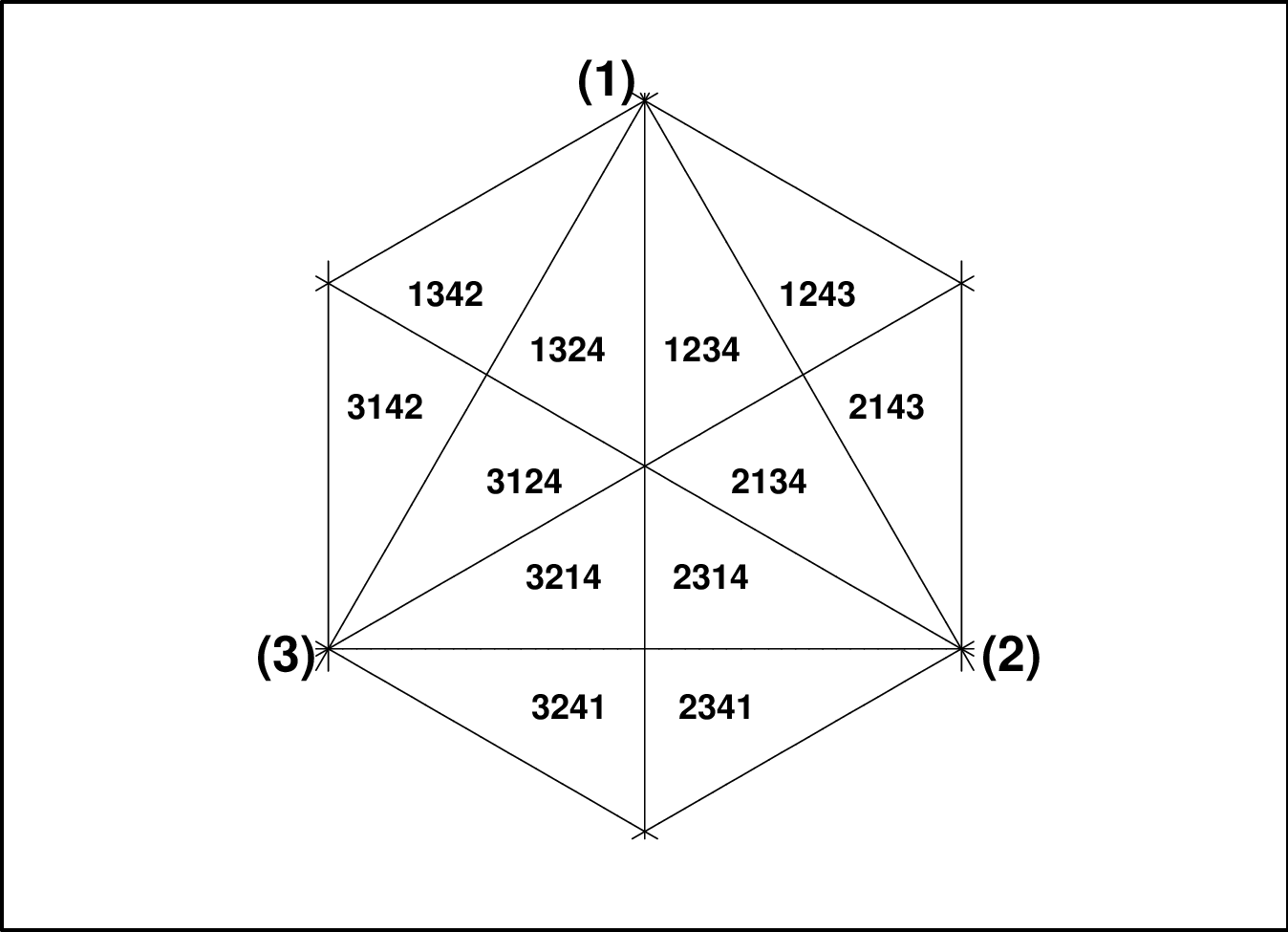,angle=270,width=0.7\textwidth}
\end{center}                         %instead of \end{center}
\vspace{-2mm}\caption{The upper half of the auxiliary cube of Figure
\ref{4ww} as seen from vertex ``$(x_4)$''.
 \label{kol4ww}}
\end{figure}
%
%\newpage

 \noindent
{The} surface of the cube of preceding paragraph is a
three-dimensional object. {Its} cut and straightening {were} used
yielding Figures \ref{4ww} and \ref{col4ww}. An alternative approach
to the visualization of the same three-dimensional cube is sampled
in Figure \ref{kol4ww} where just half of its surface is presented
as seen after the cube is rotated in such a way that the vertex
marked by symbol ``$(x_4)$'' appears in front and in the very center
of the picture.

Although the latter planar view represents the three-dimensional cube,
we might re-read the same picture as a flat triangle
with vertices $(1)$, $(2)$ and $(3)$, endowed with the three
planar extensions
which could be bent along the edges and turned down.
A new
three-dimensional surface emerges.
Its complementary, ``invisible'' lower half
could be
reconstructed, given the same shape and glued to the upper half.

%%plot({[sin(t), cos(t), t=-Pi..Pi],0,sqrt(3)*x,-sqrt(3)*x},x
%%=-1.2..1.2,view=-1.2..1.2,axes=none,color=black);
%%********** Figure 1 zde
%\begin{figure}[h]                     %instead of \begin{figure}[t]
%\begin{center}                         %instead of \begin{center}
%\epsfig{file=canab.eps,angle=270,width=0.5\textwidth}
%\end{center}                         %instead of \end{center}
%\vspace{-2mm}\caption{Six two-dimensional wedge-shaped Weyl chambers at $A=3$.
% \label{3ww}}
%\end{figure}

\begin{figure}[h]                     %instead of \begin{figure}[t]
\begin{center}                         %instead of \begin{center}
\epsfig{file=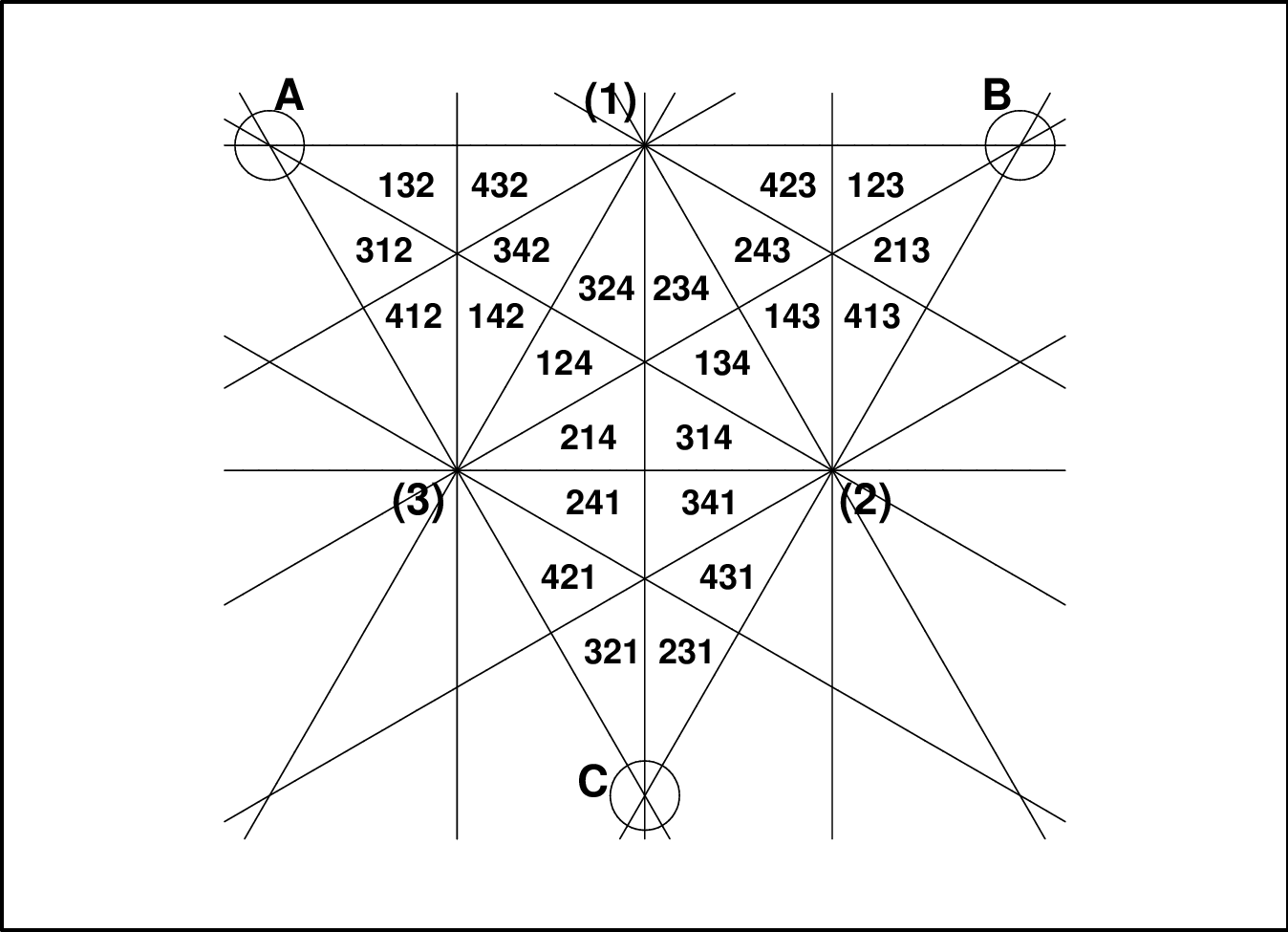,angle=270,width=0.7\textwidth}
\end{center}                         %instead of \end{center}
\vspace{-2mm}\caption{A tetrahedral alternative to Figure \ref{4ww}
(the surface unfolded
into triangle $ABC$).
 \label{4xww}}
\end{figure}

Admissibly, such a type of reconstruction of the missing part of
Figure \ref{kol4ww} is oversophisticated. Nevertheless, just its
minor reinterpretation yields another, much more regular
Platonic-body representation of the topology, i.e., of the
representation of the Weyl-chamber neighborhoods. Explicitly it is
presented in Figure \ref{4xww}. The Weyl chambers (numbered there by
the mere triplets of integers) are organized there into three
triangles. {The} upper one (with vertices $(1)$, $(2)$ and $(3)$)
has to stay unchanged while the other three triangles have to be all
bent down, with their respective circle-marked ``outer'' vertices
$A$, $B$ and $C$ to be glued together. Forming the fourth vertex of
a tetrahedron which should be endowed with symbol $(4)$.

\begin{figure}[h]                     %instead of \begin{figure}[t]
\begin{center}                         %instead of \begin{center}
\epsfig{file=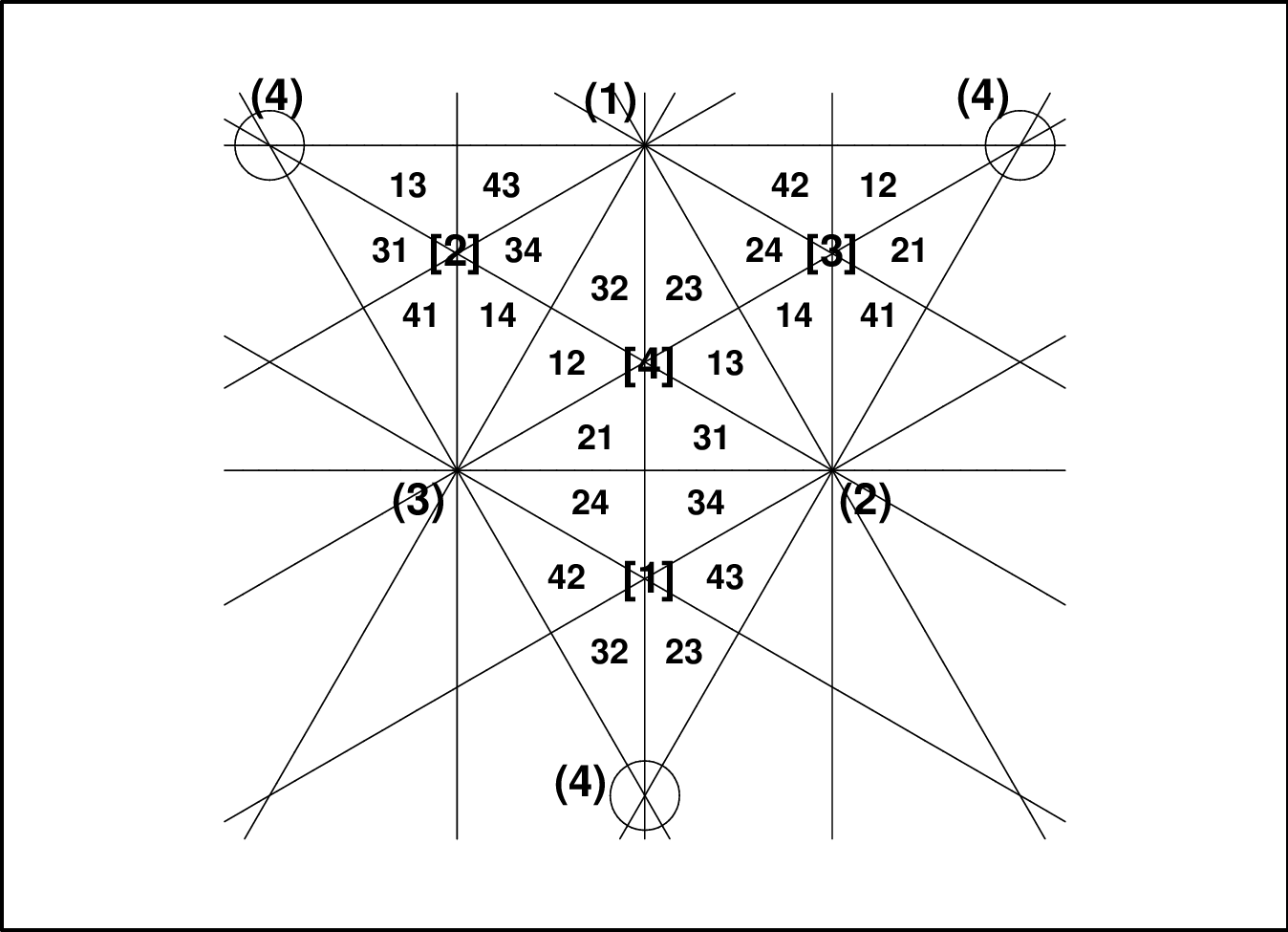,angle=270,width=0.7\textwidth}
\end{center}                         %instead of \end{center}
\vspace{-2mm}\caption{An alternative version of Figure \ref{4xww}
(a shortened Weyl-chamber classification).
 \label{4yww}}
\end{figure}

In Figure \ref{4xww} the three
intersections marked as $(1)$, $(2)$ and $(3)$ have to be
interpreted as
the three vertices of one of the
four triangular sides of an auxiliary central tetrahedron.
Its remaining three sides are triangles $(1)B(2)$,
$(2)C(3)$ and $(3)A(1))$.
In the picture the surface of
the tetrahedron is unfolded so that after its refolding
(in three dimensions)
the three
``outermost'' vertices of the large triangle
(marked by the small circles and by the letters
$A$, $B$ and $C$)
will coincide.

{In} a marginal remark let us add that  for the sake of brevity we
numbered all of the small, Weyl-chamber-representing triangles just
by the triplets of integers $jkl$. The reason is that for a return
to the full-fledged notation $W_{\{ijkl\}}$ the ``missing'' value of
$i$ is provided by the nearest vertex $(i)$. {The} rule becomes
sufficient after the symbols $A$, $B$ and $C$ are all re-read as
$(4)$. Incidentally, in such a simplification we can move one step
further. This is explained in our last Figure \ref{4yww} in which we
attached the new quadruplet of symbols $[l]$ to the remaining four
vertices. Then, as long as all of the neighboring Weyl-chambers
$W_{\{ijkl\}}$ share the last integer $l$ with the vertex $[l]$,
just two digits suffice for the labeling.

\subsubsection{An ultimate classification of the barriers}

Besides a purely formal appeal of
the shortened notation. it also indicates, directly,
which particles are exchanged when crossing some of the singular
boundary lines. Due to this,
the combinatorics of coloring becomes easier,
simplifying
also our ultimate understanding of the
correspondence between the microscopic symmetries/asymmetries of the
particle-particle repulsion
and the structure of the macroscopic bound-state spectrum.

Any
phenomenologically motivated and exhaustive ``tetrahedral-surface''
classification of all of the 24 Weyl chambers
(and of their shared singular boundaries)
as provided by
Figures \ref{4xww} and \ref{4yww}  can be reinterpreted as a completion of
Figure \ref{kol4ww}, with its ``invisible'' half added.
The four-digit particle orderings $ijkl$
of Figure \ref{kol4ww} were shortened to the three-digit symbols $bcd$
in Figure \ref{4xww}. The reason is that
for the full identification of the chambers $W_{\{ijkl\}d}$ in our notation,
the knowledge of the trailing triplet $jkl$ is sufficient.
The ``missing'' value of $i$
(always equal to 1, 2, 3 or 4)
coincides with the one given in the parenthesized
symbol $(i)$ marking the adjacent vertex.
The validity of this coincidence can
immediately be checked by the inspection of the ``incomplete''
Figure~\ref{kol4ww}.

An alternative to Figure \ref{4xww} is presented in Figure \ref{4yww}
where  the remaining four ``anonymous''
sextuple intersections of the singularity lines
(localized in the centers of the sides of the tetrahedron)
are marked by the symbols  $[1]$, $[2]$, $[3]$ and $[4]$.
Again, having noticed that the adjacent chambers-marking triplets
of integers $jkl$ share the vertex $[z]$ with $z\equiv l$,
we introduced our ultimate index-shortening
convention $jkl \to jk\,$ in Figure~\ref{4yww}.

Under the latter convention one can finally decide to set the tetrahedron's
edges equal to one and to distinguish, subsequently, between the
``long'', ``middle'' and ``short''
Weyl-chamber singular boundaries of the respective lengths
$\alpha=1/\sqrt{3} \approx 0.577$,
$\beta=1/2$ and $\gamma=1/\sqrt{12} \approx 0.289$.
Using this notation one can conclude that among the total number of 36
singular boundaries we have to deal here with 12 short ones, 12 midle ones
and 12 long ones, with the long ones always having the
end points of the form $(i)[l]$ while representing the lines of
the reordering of the remaining two integers $j$ and $k$ in
our ultimate chamber-classification scheme of Figure \ref{4yww}.

%\newpage

\section{Summary\label{summary}}

One of the characteristic features of the conventional Calogero
model (\ref{CaHa}) which describes the one-dimensional motion of a
system of $A$ quantum particles is the symmetry of the Hamiltonian
with respect to the reordering of  {the} particles. Remarkably
enough, such a symmetry can be spontaneously broken because the
details of the dynamics are such that whenever one fixes the
ordering in advance, it will be, during the evolution, conserved.
{The} system can be perceived as composed of multiple (i.e., of as
many as $A!$) independent subsystems.

In such a situation it was natural to ask the question about the
feasibility of breaking the above-mentioned symmetry manifestly. The
motivation looked sound: In the symmetric model the spectrum was
multiply degenerate and it was not too clear how one could unfold
this degeneracy. At the same time, a promising methodical guidance
seemed to lie in a manifest violation of the symmetry in a way
illustrated by Figure \ref{picee3ww} {and by the ``validating''
small$-A$ examples as thoroughly described in sections
\ref{revisited} and \ref{3revisited}} above.

Along this line we found and described
one of the possible answers.
A modification of the model has been proposed
in which
the class of the
forces which control the evolution
has been extended.
The basic idea of the
innovation
may be seen in the dynamical nature of the
conservation of the ordering.
It appeared to be caused
by the original Calogero's particle-particle-repulsion forces
which were simply too strong. Their strength
made any
exchange of the neighboring particles prohibited.

{We} argued that the main reason of a complete suppression of the
rearrangements had to be seen, first of all, in the strongly
singular behavior of the Calgero's original repulsion forces at
short distances. From this we deduced that the whole rearrangement
suppression paradox can be identified as a mere misunderstanding. We
believe that  {the puzzle} is now clarified.

{We} recollected that the conventional interpretation of the
Calogero's quantum system (as currently accepted by the majority of
authors) is that its Hamiltonian $H$ is a direct sum of all of the
eligible ordering-dependent operators $H_a$ where the subscript
$a=\{i_1,i_2,\ldots,i_A\}$ denotes and defines such an ordering. We
imagined that from this point of view one can and should reinterpret
the singular particle-particle repulsion as an impenetrability of
the boundaries of the Weyl chambers $W_a$.

This was a decisive step.
It gave birth to the main idea of our present paper:
We proposed that one can construct a
fairly large family of
the new Calogero-like Hamiltonians
in which a central role is played by
the turn of our model-building attention
from the complete kinematical Euclidean space $\mathbb{R}^A$
(comprising all of the individual particle coordinates
$x_j \in \mathbb{R}$ with $j=1,2,\ldots,A$)
to its
decomposition into a union of the
{\it ad hoc\,} $A-$dimensional
submanifolds.

Routinely, after the standard and
trivial decoupling of the independent
center-of-mass motion,
the decomposition was reduced to the space $\mathbb{R}^{A-1}$
defined as a
union of the  $(A-1)-$dimensional
wedge-shaped Weyl chambers $W_a$.
Then, the rest and the largest part of the paper had to be
devoted to an explicit description and discussion
of some consequences of the main idea.

One of the phenomenologically most interesting consequences of the
decomposition of the set $\mathbb{R}^{A-1}=\bigcup W_a\,$
parametrized by the $(A-1)-$plet of relative coordinates
concerned the boundaries between the neighboring Weyl chambers.
These (repulsive) boundaries just correspond,
from the microscopic point of view, to the
mutual (repulsive)
particle-particle interactions. Thus,
in the language of mathematics,
both of these representations of
the
(impenetrable) boundaries (not admitting any tunneling)
appeared to be
allowed left-right asymmetric.

The
consequences of asymmetrization
of such an
``input'' information about dynamics
were discussed, in detail, at $A=2$, $A=3$ and
at $A=4$. In all of these
exemplifications the loss of symmetry was shown to lead to
a higher flexibility (which is due to the emergence
of many new and freely
variable parameters)
as well as to
a partial or complete suppression of the ``global'' degeneracy of
the spectrum of the original Calogero's model.

One of the most important challenges
appeared to be the preservation
of the exact solvability of the model after
its
asymmetrization.
Due to an (at least partial)
suppression of the degeneracy of
the spectrum of the conventional symmetric model
the related unfolding of the wave functions
was, really, expected to prove useful, say, during
some further future amendment of the model
using, say, perturbation theory.

In our text we pointed out that in the symmetric-model special case
the exact solvability
of quantum Hamiltonian~(\ref{CaHa})
is a nontrivial consequence of its
Lie-algebraic symmetry \cite{perelomov}.
The existence of this symmetry
has only been rendered possible by
the idealized picture of dynamics: The
motion of the multiplet of particles
is one-dimensional. Moreover,
the mutual two-body interactions between individual particles
are represented by the quartic plus
inverse quartic potentials
$V_{ij}^{(attractive)}\sim (x_i-x_j)^2$
and
$V_{ij}^{(repulsive)}\sim 1/(x_i-x_j)^2$,
respectively.
In our present paper, from this point of view,
the generalization can be simply
characterized as making
the shape of the forces $V_{ij}^{(repulsive)}$
asymmetric, i.e., different for $x_i>x_j$ and for $x_i<x_j$.
This rendered the new Calogero-like model multi-parametric.

Another important
aspect of the innovation is that in spite of the
emergence of new free parameters,
the resulting Calogero-like
model remained solvable exactly.
Although such a statement may sound surprising,
the explanation is not too difficult,
requiring just a more consequent use of the
concept of the Weyl chambers.
Incidentally, we could
borrow
their definition from the conventional Calogero's model.
Thus, what we only had to add was the
emphasis upon
the strongly singular nature of
the particle-particle short-range interactions
$V_{ij}^{(repulsive)}$.
We only had to reread the well known
impenetrability of these barriers
as
a ban of particle exchanges.

During the unitary time-evolution of our
generalized quantum system
the ordering of the particles can be perceived as conserved.
This is the most immediate consequence
of the full-scale suppression of the tunneling between
the neighboring Weyl chambers.
Thus, any bound state of any Calogero-like
quantum system can be understood as represented by the wave function
supported by a single Weyl-chamber subdomain of coordinates.
The only difference between the original and the present (i.e.,
``asymmetrized'') models is that in the former case, due to the
uniqueness of their coupling $C$, every bound-state energy
level is particle-ordering-independent.

We have shown how
the ``global'' degeneracy of the levels can systematically be removed.
What is sufficient is that in Hamiltonian (\ref{CaHa})
the coupling constant $C$
is made configuration-dependent.
Thus,
along all of the inner boundaries of one of the Weyl chambers
(say, of $W_a$
characterized by the arrangement
$x_{i_1}<x_{i_2}<\ldots <x_{i_A}$ of the particle
coordinates) we require that $C=C_a$
while in another Weyl chamber (say, in $W_b$ characterized by another
ordering $x_{j_1}<x_{j_2}<\ldots <x_{j_A}$
{\it alias\,}
permutation
of the particles) we set $C=C_b$, etc.
In general,
one then has to distinguish between two scenarios.
Either the respective singular coupling strengths coincide (i.e.,
we have $C=C_a=C_b$) or not.

In the latter case of our present interest
we had
to generalize, ``asymmetrize'' the form of the Hamiltonian of
Eq.~(\ref{CaHa}) by making it configuration-dependent. In particular,
once we wished to keep the model solvable,
we had to make {\em all\,} of the
choices of the
parameters
mutually compatible.
In this context, we found the criterion.
Naturally,
for an ultimate explicit construction of the
solvable multiparametric Calogero-like new model,
such a guarantee is a crucial mathematical ingredient.
So we showed that
such a guarantee is just a
combinatorial exercise, provided only that
we take into consideration the impenetrability of the
singular particles-separating barriers.

The absence of tunneling enabled us to
infer that
the conventional Calogero's choice of
the spatially symmetric {\it alias\,}
particle-exchange invariant
$V_{ij}^{(repulsive)}\sim 1/(x_i-x_j)^2$
is
not sufficiently general.
In a systematic search for its more general
spatially asymmetric
and configuration-dependent
alternatives we started our analysis at the
two-particle scenario. We replaced the
conventional Calogero's choice of repulsion by
the spatially asymmetric formula (\ref{ainq}) and outlined the consequences.
At $A=3$ we displayed
all of the six
wedge-shaped Weyl
chambers in Figure \ref{3wwu}.
In subsequent Figure \ref{3ww} we showed that and how
the two-dimensional
Weyl-chamber wedges
can be projected, for simplification purposes,
on the segments of an auxiliary circle or on the sides
of an auxiliary triangle.
The idea
has subsequently been transferred to
$A=4$.
We emphasized that the role of the
circle and triangle
at $A=3$ becomes inherited by a sphere and
a cube
or, alternatively, by a tetrahedron at $A=4$, etc.
This was
our last illustration and argument in support
of the consistency of the process of asymmetrization at any $A$.

Summarizing, we established, first of all, that what can
and should be asymmetric are
the boundaries between the Weyl chambers.
Secondly, we emphasized that
a return to the generic form
of the Calogero's Hamiltonian (\ref{CaHa})
is possible, provided only that this operator is
reinterpreted as particle-ordering-dependent.
Thirdly, its form of
a direct sum over $A!$ subsystems
has been shown to open the possibility of
an unfolding of the ``global''
spectral degeneracy.
Last but not least we pointed out that
due to the absence of tunneling, the unfolding of levels
can be
mediated simply by the choice of a suitable
multiplet of the particle-repulsion couplings $C_{\{i_1, \ldots,i_A\}}$,
leading to a classification via coloring
of the Weyl chambers.
For a really elementary illustration
a return to Figure \ref{picee3ww} can be, once more, recommended.

\newpage

\section*{Appendix: The occurrence of asymmetric
impenetrable barriers in relativistic quantum mechanics\label{discussion}}

The conventional non-relativistic
quantum theory admits,
in contrast to its classical limit, the
existence of
bound states even in a strongly singular attractive
potential
$V(x) = -g^2 X^{-2}$, provided only that
the coupling is not too large \cite{Landau}.
In paper \cite{Ishkha},
Ishkhanyan with Krainov
turned attention
to analogous singular-potential problems in
relativistic quantum mechanics.
Some of their results might find a
not quite expectable non-relativistic counterpart
in our present paper.

In their
study of a stationary one-dimensional Dirac equation
they considered, in particular,
a specific pseudoscalar screened Coulomb potential
exhibiting a conventional spatial symmetry,
 $$
 W^{(IK)}(x)\sim a\,|x|^{-1} + b\, |x|^{+1/3}\,.
 $$
In a way reported during recent conference \cite{Ishkhab}
they revealed that besides its well motivated phenomenological origin,
the model could also prove
methodically relevant because it can be
made conditionally exactly solvable (CES;
for a concise information
on the latter concept
see also our older comment \cite{PRAZnojil}).

From our present point of view
the essence of the
Ishkhanyan's and Krainov's message
is that
after one rewrites Dirac equation
in a mathematically equivalent
non-relativistic bound-state form,
one just has to solve
the effective
Schr\"{o}dinger equation
with a spatially manifestly asymmetric
potential
 \be
 V_{}^{(IK)}(x) = \left \{
 \begin{array}{lll}
 (x^2)^{1/3}+u_{Left}^{(IK)}/x^{2} \,,
 \ \ \ &u_{Left}^{(IK)}=\ell(\ell+1)>0\,,\ \ \ &x < 0\,,\\
 (x^2)^{1/3}+v/(x^2)^{1/3} +
 u_{Right}^{(IK)}/x^{2}
 \,,\ \ \ &u_{Right}^{(IK)}=-\ell(-\ell+1)<0\,,
 \ \ \ &x > 0\,
 \ea
 \right .
 %\ \ \ \ \ell \in (0,1)\,.
 \label{VIK}
 \ee
with parameter $\ell \in (0,1)$.
In particular, their model possessed a spatially manifestly asymmetric
central singularity with which
the mathematically desirable CES property
has been found to occur at a
special value of $\ell= \ell^{(IK)}=1/6$.

We have to add that
the authors of paper \cite{Ishkha} worked with an extreme,
phenomenologically ambitious asymmetry
of their potential. In our present notation
they used just
Eq.~(\ref{ainq}) with
two independent couplings,
 \be
 \ell_{(left)}(\ell_{(left)}+1)=\frac{7\hbar^2}{72 m}\,,\ \ \ \
 \ell_{(right)}(\ell_{(right)}+1)=-\frac{5\hbar^2}{72 m}\,.
 \label{sest}
 \ee
They were of an opposite sign and of different
sizes, and there were no free parameters left.
Indeed, once we return to
the
units $\hbar =2m=1$,
their CES
constraint (\ref{sest})
acquired the form
of our present Eq.~(\ref{las})
with $y= y_{(CES)}=1/6$.

For us, the latter results were encouraging.
Demonstrating that it makes sense to work with asymmetric
singular forces. By relations (\ref{sest}) we also felt inspired to a
replacement of
ansatz~(\ref{ainq})
by the more elementary one-parametric
convention~(\ref{kvIK}).
We came to the conclusion that the CES-based necessity of
working with a fixed value of $\ell=1/6$ in (\ref{sest})
was too high a price for the
technical advantage of the
user-friendliness of the model.
Thus,
we turned attention to the harmonic-oscillator
confinement~(\ref{kvIK}) because qualitatively,
the asymmetric shape
of our two-body potentials
as sampled by Figure \ref{picee3ww}
is, after all, not too different from
the shape
of the CES potential
$V_{CES}^{(IK)}(x)$ of Eq.~(\ref{VIK})
as displayed in Figure Nr.~2 of paper~\cite{Ishkha}.

In \cite{Ishkhab} the authors emphasized that
it makes good sense to search for a deeper
understanding of the connection between
the flexibility of the
spectral properties and
the absence of the tunneling
caused by the singular nature of the $X^{-2}$ barriers.
They noticed that the choice of
the
matching condition at $X = 0$
``must be adjusted according to
the specific physical context under consideration'' \cite{Ishkha}.
In our present paper the latter point has attracted due attention.
We managed to
disentangle
several related technical challenges.
In particular, we shortened the argumentation
by
omitting all of the phenomenologically motivated
references to the relativistic quantum dynamics.
After all,
even in \cite{Ishkha} we read that
the ``results show a significant difference
between the Schr\'{o}dinger and Dirac cases'', and that
``it is clear that
the reason for the discrepancy
[between the Schr\'{o}dinger and Dirac models]
is that the boundary
conditions we [=they] use for these two cases are [different, i.e.,]
due to different considerations''.
For this reason we kept the scope of our present study
restricted to
the strictly
non-relativistic systems, promising still
a very broad potential applicability \cite{scholarpedia}.

\end{document}